\DeclareMathOperator*{\argmax}{arg\,max}  
\DeclareMathOperator*{\cov}{Cov} 
\DeclareMathOperator*{\var}{Var}	
\newtheorem{theorem}{Theorem}
\newtheorem{lemma}[theorem]{Lemma}
\newtheorem{definition}{Definition}
\newtheorem{remark}{Remark} 
\newtheorem{assumption}{Assumption}
\newtheorem*{problemdefinition}{Problem Definition} 
\newcommand{\ttdist}{{\pi_{tt}}} 
\newcommand{\pardist}{{p_\theta}} 
\newcommand{\parmat}{{P_\theta}} 
\newcommand{\parstatdist}{{\pi_{\theta}}}
\newcommand{\seedatn}{{\mathcal{I}_n}}
\newcommand{\seed}{{\mathcal{I}}}
\newcommand{\hmmpost}{\pi_n^\theta}
\newcommand{\hmmprior}{\pi_0}
\begin{document}
%
\title{Influence Maximization over Markovian Graphs: A Stochastic Optimization Approach}
%
%
%

\author{Buddhika~Nettasinghe,~\IEEEmembership{Student~Member,~IEEE}
        and~Vikram~Krishnamurthy,~\IEEEmembership{Fellow,~IEEE}
\thanks{Authors are with the Department
of Electrical and Computer Engineering, Cornell~University and, Cornell~Tech, 2 West Loop Rd, New~York, NY 10044, USA e-mail:  \{dwn26, vikramk\}@cornell.edu.}
}

\maketitle

\begin{abstract}
This paper considers the problem of randomized influence maximization over a Markovian graph process: given a fixed set of nodes whose connectivity graph is evolving as a Markov chain, estimate the probability distribution (over this fixed set of nodes) that samples a node which will initiate the largest information cascade (in expectation). Further, it is assumed that the sampling process affects the evolution of the graph i.e. the sampling distribution and the transition probability matrix are functionally dependent. In this setup, recursive stochastic optimization algorithms are presented to estimate the optimal sampling distribution for two cases: 1) transition probabilities of the graph are unknown but, the graph can be observed perfectly 2) transition probabilities of the graph are known but, the graph is observed in noise. These algorithms consist of a neighborhood size estimation algorithm combined with a variance reduction method, a Bayesian filter and a stochastic gradient algorithm. Convergence of the algorithms are established theoretically and, numerical results are provided to illustrate how the algorithms work. 
\end{abstract}

\begin{IEEEkeywords}
Influence maximization, stochastic optimization, Markovian graphs, independent cascade model, variance reduction, Bayesian filter.
\end{IEEEkeywords}

%
\IEEEpeerreviewmaketitle

\section{Introduction}
\label{sec:introduction}
{\it Influence maximization} refers to the problem of identifying the most influential node (or the set of nodes) in a network, which was first studied in the seminal paper \cite{kempe2003}. However, most work related to influence maximization so far has been limited by one or more of the following assumptions:
\begin{enumerate}
\item Deterministic network (with no random evolution)

\item Fully observed graph (instead of noisy observations of the graph)

\item Passive nodes (as opposed to active nodes that are responsive to the influence maximization process).
\end{enumerate}

This paper attempts to relax the above three assumptions. We develop stochastic optimization algorithms for influence maximization over a randomly evolving, partially observed network of active nodes (see Fig. \ref{fig:block_diagram_optimzation} for a schematic overview of our approach).

\begin{figure*}[h!]
	\centering
	\includegraphics[width=5.5in,height=2.75 in]{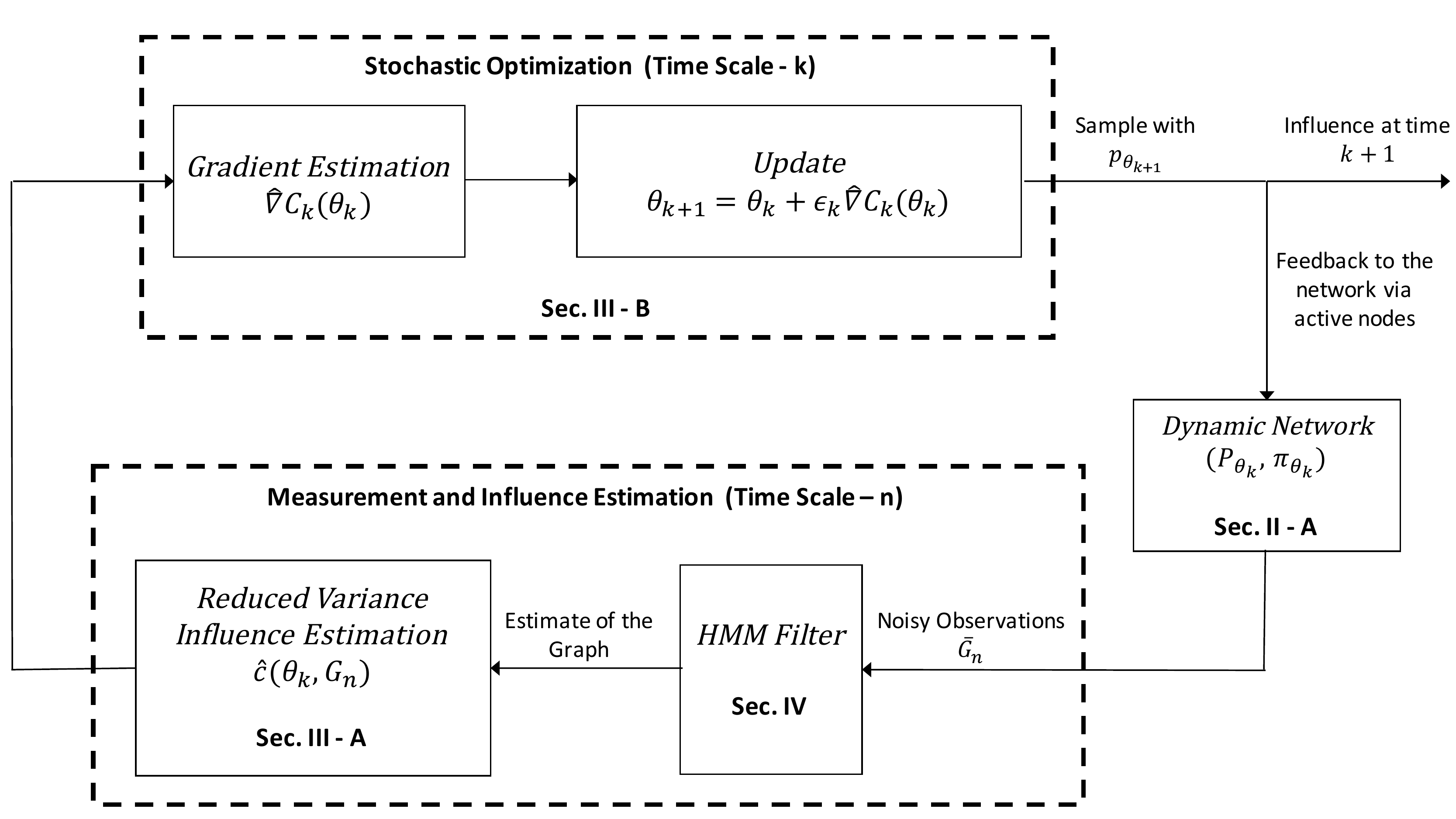}
	\caption{Schematic diagram of the proposed stochastic optimization algorithm for influence maximization over a partially observed dynamic network of active nodes, showing how the algorithmic sub-components are integrated to form the closed loop (with feedback from the sampling process) stochastic optimization algorithm and their organization in the paper.}
	\label{fig:block_diagram_optimzation}
\end{figure*}

To understand the motivation behind this problem, consider a social network graph where nodes represent individuals and the directed edges represent connectivity between them. Assume that this graph evolves in a Markovian manner\footnote{Dynamics of social networks (such as seasonal variations in friendship networks) can naturally be modeled as Markov processes. Another example would be a vehicular network where, the inter-vehicle communication/connectivity graph has a Markovian evolution due to their movements. Refer \cite{hamdi2014_tracking} for an example in the context of social networks.} with time. Further, each individual can pass/receive messages (also called infections depending on the context) from their neighbors by communicating over the directed edges of the graph. Communication over these edges will incur time delays that are independently and identically distributed (across the edges) according to a known distribution. An influence maximizer wants to periodically convey messages (e.g. viral marketing) that expire in a certain time window to the nodes in this evolving social network. Conveying each message to the nodes in the social network is achieved by sampling a node (called seed node) from the graph according to a probability distribution and, giving the message to that seed node. Then, the seed node initiates an information cascade by transmitting the message to its neighbors with random delays. The nodes that receive the message from their neighbors will continue to follow the same steps, until the message expires. It is assumed that the graph remains same throughout the diffusion of one message i.e. graph evolves on a slower time scale compared to the expiration time of a message. Further, we allow the nodes of the social network to be active: nodes are aware of the sampling distribution and, respond by modifying the transition probabilities of the graph according to that distribution (for example, due to the incentive that they receive for being the most influential node\footnote{Another example for active nodes is a network of computers that are adaptively modifying their connectivity network, depending on how vulnerable each computer is to a virus attack.}). This makes the transition probability matrix functionally dependent on the sampling distribution. In this setting, the goal of the influence maximizer is to compute the sampling distribution which maximizes the expected total number of nodes that are infected (size of the information cascade) before a message expires (considering the randomness of sampling the nodes, message diffusion process as well as the graph evolution). This motivates us to pursue the aim of this paper, which is to devise a method for the influence maximizer to estimate the optimal sampling distribution recursively, with each message that is distributed. 

\textbf{The main results} of this paper are  two recursive stochastic gradient algorithms, for the influence maximizer to recursively estimate (track) the optimal sampling distribution for the following cases:
\begin{enumerate}
\item \label{contribution:case1}influence maximizer does not know the transition probability matrix but, has perfect (non-noisy) observations of the sample path of the graph.

\item \label{contribution:case2}influence maximizer knows the transition probability matrix but, has only partial (noisy) observations of the sample path of the graph  evolution.
\end{enumerate}
The key components of the above two algorithms (illustrated in Fig. \ref{fig:block_diagram_optimzation}) include the following.
\begin{itemize}
	\item {\bf Reduced variance neighborhood size estimation algorithm}: Influence maximization problems involve estimating the influence of nodes, which can be posed as a problem of estimating the (expected) sizes of node neighborhoods. For this, we use a stochastic simulation based neighborhood size estimation algorithm (which utilizes a modified Dijkstra's algorithm, combined with an exponential random variable assignment process), coupled with a variance reduction approach. It is shown that this reduced variance method improves the convergence of the proposed algorithms when tracking the optimal influence in a time evolving system. 
	
	\item {\bf Stochastic optimization with delayed observations of the graph process:} The observations of the graph sample path in the two algorithms (in main contributions) are not assumed to be available in real time. Instead, it is sufficient if the sample paths of finite lengths become available as batches of data with some time delay\footnote{In most real world applications, one can only trace back the evolution of a social network over a period of time (length of the finite sample path), instead of monitoring it in real time. e.g. how the graph has evolved over the month of January becomes available to the influence maximizer only at the end of February due to delays in obtaining data.}.These finite length graph sample paths are used in a stochastic optimization method that is based on the simultaneous perturbation stochastic approximation (SPSA) method, coupled with a finite sample path gradient estimation method for Markov processes. The proposed algorithms are applicable even for the more general case where, the system model (state space of the Markovian graph process, the functional dependency between the sampling distribution and transition probabilities, etc) is varying on a slower (compared to the stochastic optimization algorithm) time scale.

	\item {\bf Bayesian filter for noisy graph sample paths: }In the algorithm for the second case (in main contributions), the sample paths are assumed to be observed in noise. In this case, a Bayesian filter is utilized to estimate the underlying state of the graph using the noisy sample path as the input. The estimates computed by the Bayesian filter are then utilized in the stochastic optimization process while preserving the (weak) convergence of the stochastic approximation algorithm. 
\end{itemize}

\vspace{0.2 cm}
{\bf Related Work:} The influence maximization problem was first posed in \cite{kempe2003} as a combinatorial optimization problem for two widely accepted models of information spreading in social networks: independent cascade model and the linear threshold model. \cite{kempe2003} shows that solving this problem is NP-hard for both of these models and, utilizes a greedy (submodular function) maximization approach to devise algorithms with a $1 - \frac{1}{e}$ approximation guarantee. Since then, this problem and its variants have been widely studied using different techniques and models. \cite{bharathi2007, borodin2010threshold, carnes2007, chen2011} studies the problem in a competitive/adversarial settings with multiple influence maximizers and provide equilibrium results and approximation algorithms. \cite{seeman2013,horel2015scalable} considers the case where all the nodes of the graph are not initially accessible to the influence maximizer and, proposes a multistage stochastic optimization method that harvests the power of a phenomenon called friendship paradox \cite{feld1991,lattanzi2015}.  Further, \cite{chen2009,chen2010scalable} provide heuristic algorithms for the influence maximization problem on the independent cascade model, which are more efficient compared to the originally proposed algorithms in \cite{kempe2003}. Our work is motivated by the problems studied in \cite{du2017,zhuang2013influence}. \cite{du2017} points out that estimating the expected cascade size under the independent cascade model can be posed as a neighborhood size estimation problem on a graph and, utilizes a size estimation framework proposed in \cite{cohen1997} (and used previously in \cite{chen2010}) to obtain an unbiased estimate of this quantity. \cite{zhuang2013influence} highlights that the study of the influence maximization problem has mostly been limited to the context of static graphs and, proposes a random probing method for the case where the graph may evolve randomly. Motivated by these observations, we focus on a continuous time variant \cite{du2017,rodriguez2011} of the independent cascade model and, allow the underlying social network graph to evolve slowly as a Markov process. The solution approaches proposed in this paper belongs to the class of recursive stochastic approximation methods. These methods have been utilized previously to solve many problems in the field of multi-agent networks \cite{krishnamurthy2014interactive, sayed2014adaptation,gharehshiran2013}. 

\vspace{0.2 cm}
{\bf Organization:} Sec. \ref{sec:diffusion_model_and_influence} presents the network model, related definitions of influence on graphs, the definition of the main problem and finally, a discussion of some practical details. Sec. \ref{sec:stochastic_optimization_perfectly_observed_graph} presents the recursive stochastic optimization algorithm (along with convergence theorems) to solve the main problem for the case of fully observed graph with unknown transition probabilities (case \ref{contribution:case1} of the main contributions). Sec. \ref{sec:stochastic_optimization_partially_observed_graph} extends to the case of partially observed graph with known transition probabilities (case \ref{contribution:case2} of the main contributions). Sec. \ref{sec:numerical_results} provides numerical results to illustrate the algorithms presented.

\section{Diffusion Model and the Problem of Randomized Influence Maximization }
\label{sec:diffusion_model_and_influence}

This section describes Markovian graph process, how information spreads in the graph (information diffusion model) and, provides the definition of the main problem. Further, motivation for the problem, supported by work in recent literature, is provided to highlight some practical details. 

\subsection{Markovian Graph Process, Frequency of the Messages and the Independent Cascade (IC) Diffusion Model}
\label{subsec:independent_cascade}
 {\bf Markovian Graph Process:} The social network graph at discrete time instants $n = 0,1,\dots$ is modeled as a directed graph $G_n = (V, E_n)$, consisting of a fixed set of individuals $V$, connected by the set of directed edges $E_n$. The graph evolves as a Markov process $\{G_n = (V, E_n)\}_{n \geq 0}$ with a finite state space $\mathcal{G}$, and a parameterized regular transition matrix $\parmat$ with a unique stationary distribution $\parstatdist$ (where, $\theta$ denotes the parameter vector which lies in a compact subset of an $M$-dimensional Euclidean space). Henceforth, $n$ is used to denote the discrete time scale on which the Markovian graph process evolves.

 \vspace{0.25cm} 
 {\bf Frequency of the Messages:} An influence maximizer distributes messages to the nodes in this evolving network. We assume that the messages are distributed periodically\footnote{The assumption of periodic messages is not required for the problem considered in this paper and the proposed algorithms. It suffices if the messages are distributed with some minimum time gap between them (on the time scale $n$).} at time instants $n= kN$ where, the positive integer $N$ denotes the period.
  
 \vspace{0.25cm} 
 {\bf Observations of the Finite Sample Paths:} When the influence maximizer distributes the $(k+1)$\textsuperscript{th} message at time ${n = (k+1)N}$, only the finite sample path of the Markovian graph process $\{G_n\}_{kN \leq n < kN+\bar{N}}$ for some fixed $\bar{N} \in \mathbb{N}$ with $\bar{N} < N$, is visible to the influence maximizer.

\vspace{0.25cm} 
 {\bf Information Diffusion model:} As explained in the example in Sec. \ref{sec:introduction}, nodes pass messages they receive to their neighbors with random delays. This method of information spreading in graphs is formalized by independent cascade (IC) model of information diffusion. Various versions of this IC model have been studied in literature. We use a slightly different version of the IC model utilized in \cite{du2017,rodriguez2011} and, it is as follows briefly. When the influence maximizer gives a message to a set of seed nodes\footnote{We consider the case where the influence maximizer selects a set of seed nodes instead of one seed node (as in the motivating example in Sec. \ref{sec:introduction}) to keep the definitions of this section more general.} $\seedatn \subseteq V$ at time $n$, a time variable $t  \in \mathbb{R}_{\geq0}$ is initialized at $t=0$. Here, $t$ is the continuous time scale on which the diffusion of the message takes place and, is different from the discrete time scale $n\in \mathbb{Z}_{\geq0}$ on which the graph evolves (and the  messages are distributed periodically). Further, the time scale $t$ is nested in the time scale $n$: $t$ is set to $t=0$ with each new message distributed by the influence maximizer. The set of seed nodes $\seedatn$ (that receive the message from influence maximizer at $n$ and $t = 0$) transmits the message through edges attached to them in graph $G_n = (V,E_n)$. Each edge $(j,i) \in E_n$ induces a random delay distributed according to probability density function $ \ttdist(\tau_{ji})$ (called transmission time distribution). Further, each edge transmits the message only once.  Only the neighbor which infects a node first will be considered as the true parent node (considering the infection propagation) of the infected node. This process continues until the message expires at $t = T$ (henceforth referred to as the message expiration time) and, the diffusion process stops at that time. It is assumed that the discrete time scale $n$ on which the graph evolves is slower than $T$ i.e. the graph will remain the same at least for $t \in [0,T]$. Then, the same message spreading process will take place when the next message is distributed.
 
 For any realization of this random message spreading process (taking place in $t \in [0,T]$), the subgraph of $G_n$ which is induced by the set of edges through which the message propagated, constitutes a Directed Acyclic Graph (DAG). Further, due to the DAG induced by the propagation of a message, the infection time $t_i$ of each node $i \in V$ satisfies the \textit{shortest path property}: conditional on a graph $G_n = (V, E_n)$, and a set of pairwise transmission times $\{\tau_{ji}\}_{(ji) \in E_n}$, the infection time $t_i$ of $i \in V$ is given by, 
\begin{equation}
\label{eq:sp_property}
t_i = g_i(\{\tau_{ji}\}_{(ji) \in E_n}\vert\, \seed)
\end{equation}
where, $g_i(\{\tau_{ji}\}_{(ji) \in E_n}\vert\, \seed)$ denotes the length of the shortest path from $\seed \subseteq V$ to $i \in V$, with edge lengths $\{\tau_{ji}\}_{(ji) \in E_n}$.

\subsection{Influence of a Set of Nodes Conditional on a Graph}
\label{subsec:estimating_the_influence_of_a_set}

We use the following definition (from \cite{du2017,rodriguez2012}) of influence of a set of nodes $\seed \subseteq V$, on a graph  $G = (V,E)$, for the diffusion model introduced in Sec. \ref{subsec:independent_cascade}. Further, $\mathbb{E}_{\ttdist}\{\cdot\}$ is used to denote expectation over a set of pairwise transmission times (associated with each edge) sampled independently from $\ttdist$.
\begin{definition}
\label{defn:influence_on_G}
The Influence, $\sigma_G(\seed,T)$, of a set of nodes $\seed$, given a graph ${G = (V,E)}$ and a fixed time window $T$ is,
	\begin{align}
	\sigma_G(\seed,T) &= \mathbb{E}_{\ttdist} \Big\{\sum_{i \in V} \mathds{1}_{\{t_i \leq T\}}\bigr\vert \seed,G   \Big\}	\label{eq:influence_on_G_1}\\ 
	&= \sum_{i \in V} \mathbb{P}_{\ttdist} \big\{{t_i \leq T}\bigr\vert \seed,G   \big\}. 
	\label{eq:influence_on_G_2}
	\end{align}
\end{definition}
In (\ref{eq:influence_on_G_1}), the influence $\sigma_G(\seed,T)$ of the set of nodes ${\seed \subset V}$ is the expected number of nodes infected within time $T$, by the diffusion process (characterized by the distribution of transmission delays $\ttdist$) on the graph $G$, that started with the set of seed nodes $\seed$. 

Note that the set of infection times $
\{t_i\}_{i \in V}$, in (\ref{eq:influence_on_G_1}) are dependent random variables. Therefore, obtaining closed form expressions for marginal cumulative distributions in (\ref{eq:influence_on_G_2}) involves computing $\vert V \vert -1$ dimensional integral which is not possible (in closed form) for many general forms of the transmission delay distribution. Further, numerical evaluation of these marginal distributions is also not feasible since it will involve discretizing the domain $[0,\infty)$ (refer \cite{du2017} for a more detailed description about the computational infeasibility of the calculation of the expected value in (\ref{eq:influence_1})).

{\bf Why randomized selection of seed nodes?} Assume that, at time instant $n = kN$, an influence maximizer needs to find a seed node $i^* \in V$ that maximizes $\sigma_{G_n}(\{i\}, T)$ (in order to distribute the $k$\textsuperscript{th} message, as explained in Sec. \ref{subsec:independent_cascade}). To achieve this, the influence maximizer needs to know the graph $G_n$, calculate the influence $\sigma_{G_n}(\{i\}, T)$ for each $i \in V$ and then, locate the $i^* \in V$ that has the largest influence. However, performing all these steps for each message is not feasible from a practical perspective. Especially, monitoring the graph in real time is practically difficult. Hence, a natural alternative is to use a randomized seed selection method. We consider a case where the seed node $i \in V$ is sampled from a parameterized probability distribution $p_{\theta} \in \Delta(V) $ such that, the expected size of the cascade initiated by the sampled node $i$ is largest when $\theta = \theta^*$. This random seed selection approach leads to the formal definition of the main problem which we call the randomized influence maximization. 

\subsection{Randomized Influence Maximization over a Markovian Graph Process: Problem Definition}
\label{subsec:problem_formulation}

We define the influence of a parameterized probability distribution $\pardist,\, \theta \in \mathbb{R}^M$, on a Markovian graph process as below (henceforth, we will use $\sigma_G(i,T)$ to denote $\sigma_G(\{i\},T)$ with a slight notational misuse).
\begin{definition}
	\label{defn:influence_function}
	The Influence, $C(\theta)$, of probability distribution $\pardist \in \Delta(V)$, on a Markovian graph process ${\{G_n = (V, E_n)\}_{n\geq 0}}$ with a finite state space $\mathcal{G}$, a regular transition matrix $\parmat$ and, a unique stationary distribution $\parstatdist \in \Delta(\mathcal{G})$ is,
	\begin{equation}
		\label{eq:influence_1}
		C(\theta) = \mathbb{E}_{G\sim \parstatdist}\{c(\theta,G)\}
	\end{equation} where, 
	\label{defn:influence}
	\begin{equation}
		\label{eq:influence_2}
		c(\theta,G) = \mathbb{E}_{i\sim \pardist} \{\sigma_G(i,T)\}.
	\end{equation}
\end{definition}
Eq. (\ref{eq:influence_2}) averages influence $\sigma_G(i,T)$ using the sampling distribution $\pardist$, to obtain $c(\theta, G)$, which is the influence of the sampling distribution $\pardist$ conditional on the graph $G$. Then, (\ref{eq:influence_1}) averages $c(\theta, G)$ using the unique stationary distribution $\parstatdist$ of the graph process in order to obtain $C(\theta)$, which is the influence of  the sampling distribution $\pardist$ over the Markovian graph process. We will refer to $c(\theta, G_n)$ as the conditional (on graph $G_n$) influence function (at time $n$) and, $C(\theta)$ as the influence function, of the sampling distribution $\pardist$. 
\begin{remark}
	\normalfont
	\label{remark:functional_dependency}
	The sampling distribution $\pardist$ is treated as a function of $\theta$, which also parameterizes the transition matrix $\parmat$ of the graph process. This functional dependency models the feedback (via the active nodes) from the sampling of nodes (by the influence maximizer) to the evolution of the graph, as indicated by the feedback loop in Fig. \ref{fig:block_diagram_optimzation} (and also discussed in an example setting in Sec. \ref{sec:introduction}). 
\end{remark}

In this context, the main problem studied in this paper can be defined as follows. 

\vspace{0.25cm}
\begin{problemdefinition}

	 Randomized influence maximization over a Markovian graph process ${\{G_n = (V, E_n)\}_{n\geq 0}}$ with a finite state space $\mathcal{G}$, a regular transition matrix $P_{\theta_{k'}}$ and, a unique stationary distribution $\pi_{\theta_{k'}}$ aims to recursively estimate the time evolving optima,
\begin{equation}
	\label{eq:optimization_problem}
	\theta^*_{k'} \in \argmax_{\theta \in \mathbb{R}^M }\, C_{k'}(\theta)
\end{equation}
where, $C_{k'}(\theta)$ is the influence function (Definition \ref{defn:influence_function}) that is evolving on the slower time scale $k' \in \mathbb{Z}_{\geq 0}$ (compared to the message period $N$ over the time scale $n$) .
\end{problemdefinition}

\begin{remark}
	\normalfont
	The reason for allowing the influence function $C_{k'}(\theta)$ (and therefore, the solution $\theta^*_{k'}$) in (\ref{eq:optimization_problem}) to evolve over the slow time scale $k'$ is because the functional dependency of the sampling distribution and the transition probability matrix (which gives how the graph evolution depends on sampling process) may change over time. Further, the state space $\mathcal{G}$ of the graph process may also evolve over time. Such changes (with time) in the system model are encapsulated by modeling the influence function as a time evolving quantity.  However, to keep the notation manageable, we assume that the influence function $C_{k'}(\theta)$ does not evolve over time in the subsequent sections i.e. it is assumed that 
	\begin{equation}
	C_{k'}(\theta) = C(\theta), \forall\,k' \in \mathbb{Z}_{\geq0}.
	\end{equation}
	 This assumption is used to keep the notation manageable and, can be removed without affecting the main algorithms presented in this paper. Further, we assume that the $C(\theta)$ has a Lipschitz continuous derivative.  
\end{remark}

\subsection{Discussion about Key Aspects of the System Model}
\label{subsec:practical_example}
{\bf Networks as Markovian Graphs:} We assumed that the graph evolution is Markovian. In a similar context to ours, \cite{clementi2009information} states that ``Markovian evolving graphs are a natural and very general class of models for evolving graphs" and, studies the information spreading protocols on them during the stationary phase. Further, \cite{clementi2016rumor} considers information broadcasting methods on Markovian graph processes since they are ``general enough for allowing us to model basically any kind of network evolution''.

\vspace{0.25cm}
{\bf Functional Dependency of the Sampling Process and Graph Evolution:}  \cite{clementi2009broadcasting} considers a \textit{weakly adversarial random broadcasting network}: a randomly evolving broadcast network whose state at the next time instant is sampled from a distribution that minimizes the probability of successful communications. Analogous to this, the functional dependency in our model may represent how the network evolves adversely to the influence maximization process (or some other underlying network dynamic which is responsive to the influence maximization). The influence maximizer need not be aware of such dependencies to apply the algorithms that will be presented in the next sections.

\section{Stochastic Optimization Method: Perfectly Observed Graph Process with Unknown Dynamics}
\label{sec:stochastic_optimization_perfectly_observed_graph}
In this section, we propose a stochastic optimization method for	 the influence maximizer to recursively estimate the solution of the optimization problem in (\ref{eq:optimization_problem}) under the Assumption \ref{assumption:fully_observed_graph} stated below. 

\begin{assumption}
	\label{assumption:fully_observed_graph}
	The influence maximizer can fully observe the sample paths of the Markovian graph process, but does not know the transition probabilities with which it evolves. 
\end{assumption}

The schematic overview of the approach for solving (\ref{eq:optimization_problem}) is shown in Fig. \ref{fig:block_diagram_optimzation} (where, the HMM filter is not needed in this section due to the Assumption \ref{assumption:fully_observed_graph}). In the next two subsections, the conditional influence estimation algorithm and the stochastic optimization algorithm will be presented.

\subsection{Estimating the Conditional Influence Function using Cohen's Algorithm}
\label{subsec:estimating_the_instantaneous_cost}
The exact computation of the node influence $\sigma_G(i, T)$ in closed form or estimating it with a naive sampling approach is computationally infeasible (as explained in Sec. \ref{subsec:estimating_the_influence_of_a_set}). As a solution, \cite{du2017} shows that the shortest path property of the IC model (explained in Sec. \ref{subsec:independent_cascade}) can be used to convert (\ref{eq:influence_on_G_1}) into an expression involving a set of independent random variables as follows:

\begin{equation}
\label{eq:influence_on_G_sp}
\sigma_G(\seed,T) = \mathbb{E}_{\ttdist} \Big\{\sum_{i \in V} \mathds{1}_{\{g_i(\{\tau_{ji}\}_{(ji) \in E}\vert \seed) \leq T\}}\Big\vert \seed,G \Big\}
\end{equation} 
where, $g_i(\{\tau_{ji}\}_{(ji) \in E}\vert \seed)$ is the shortest path as defined previously in (\ref{eq:sp_property}). Further, note from (\ref{eq:influence_on_G_sp}) that influence of the set $\seed$, $\sigma_G(\seed,T)$ is the expected $T$-distance neighborhood (expected number of nodes within $T$ distance from the seed nodes) i.e.
\begin{equation}
\label{eq:influence_of_a_node_on_a_graph}
\sigma_G(\seed,T) = \mathbb{E}_{\ttdist} \big\{\big[\vert \mathcal{N}_G(\seed,T)\vert\big]\big\vert \seed,G\big\}
\end{equation} where,
\begin{equation}
\mathcal{N}_G(\seed,T) = \{i\in V: g_i(\{\tau_{ji}\}_{(ji) \in E}\vert \seed) \leq T\}.
\end{equation}
Hence, we only need a neighborhood size estimation algorithm and samples from $\ttdist$ to estimate the influence $\sigma_G(\seed,T)$ of the set $\seed \subset V$.

\subsubsection{Cohen's Algorithm} Based on (\ref{eq:influence_of_a_node_on_a_graph}), \cite{du2017} utilizes a neighborhood size estimation algorithm proposed in \cite{cohen1997} in order to obtain an unbiased estimate of influence $\sigma_G(\seed,T)$.  This algorithm is henceforth referred to as Cohen's algorithm. The main idea behind the Cohen's algorithm is the fact that the minimum of a finite set of unit mean exponential random variables is an exponential random variable with an exponent term equal to the total number random variables in the set.  Hence, for a given graph ${G = (V,E)}$ and a transmission delay set $L_u = {\{\tau_{ji}\}^u_{(ji) \in E}}$ (where $u$ denotes the index of the set of transmission delays), this algorithm assigns $m$ number of exponential random variable sets ${l_{u,j} = \{r_v^{u,j} \sim \exp(1): v\in V\}}$, where, $j = 1,\dots,m$. Then, a modified Dijkstra's algorithm (refer \cite{cohen1997,du2017} for a detailed description of the steps of this algorithm) finds the smallest exponential random variable $\bar{r}^{\seed}_{u,j}$ within $T-$distance from the set $\seed$, for each $l_{u,j}$ with respect to the transmission time set $L_u$. Then, \cite{cohen1997} shows that $\mathbb{E}\big\{\frac{m-1}{\sum_{j= 1}^{m}\bar{r}^{\seed}_{u,j}}\big\vert L_u\big\}$ is an unbiased estimate of $\mathcal{N}_G(\seed,T)$ conditional on $L_u$. Further, this Cohen's algorithm for estimating $\sigma_G(\seed,T)$ has a lower computational complexity which is near linear in the size of network size, compared to the computational complexity of a naive simulation approach (repeated calling of shortest path algorithm and averaging) to estimate $\sigma_G(\seed,T)$ \cite{du2017}.  

\subsubsection{Reduced Variance Estimation of Influence using Cohen's algorithm} We propose Algorithm \ref{alg:instantaneous_cost_estimation} in order to estimate the conditional influence function $c(\theta, G)$.  First four steps of Algorithm \ref{alg:instantaneous_cost_estimation} are based on a reduced variance version of the Cohen's algorithm employed in \cite{du2017} called CONTINEST.  The unbiasedness and the reduced (compared to the algorithm used in \cite{du2017})  variance of the estimates obtained using Algorithm \ref{alg:instantaneous_cost_estimation} are established in Theorem \ref{th:unbiasedness}. 

\begin{algorithm}
	\caption{Conditional influence estimation algorithm}
	\label{alg:instantaneous_cost_estimation}
	\DontPrintSemicolon 
	\KwIn{Sampling distribution $\pardist$, Cumulaive Distribution $F_{tt}(\cdot)$ of the transmission time distribution $\ttdist$, Directed graph $G = (V, E)$}
	\KwOut{Estimate of conditional influence function: $\hat{c}(\theta, G)$}
	\vspace{0.3cm}
	For all $v \in V$, execute the following steps simultaneously.
	\begin{enumerate}	
		\item Generate $s/2$ sets of uniform random variables: 
		\begin{equation*}
		\{\{U^u_{ji}\}_{(ji) \in E}: u = 1,2,\dots, s/2\} 
		\end{equation*}

		\item For each $u = 1,\cdots, s/2$, generate a correlated pair of random transmission time sets as follows: 
		\begin{align}
			L_u &= \{F_{tt}^{-1}(U^u_{ji})\}_{(ji) \in E} \label{eq:correlated_transmission_time_sets_1}\\
			L_{s/2+u} &= \{F_{tt}^{-1}(1 - U^u_{ji})\}_{(ji) \in E} \label{eq:correlated_transmission_time_sets_2}
		\end{align}

		\item For each set $L_u$ where $u = 1,\cdots, s$, assign $m$ sets of independent exponential random variables: ${l_{u,j} = \{r_v^{u,j} \sim \exp(1): v\in V\}},\, j = 1,\dots,m$.
		
		\item Compute the minimum exponential random variable $\bar{r}^{v}_{u,j}$ that is within $T$-distance from $v$ using the modified Dijkstra's algorithm, for each $l_{u,j}$. Calculate,
		\begin{equation}
			\label{eq:cohen_estimate}
			X(v,G) = \frac{1}{s} \sum_{u = 1}^{s} \frac{m-1}{\sum_{j= 1}^{m}\bar{r}^{v}_{u,j}}
		\end{equation}
		
		\item Compute,
		\begin{equation}
			\label{eq:instantaneous_cost_estimate}
			\hat{c}(\theta, G) =  \sum_{v \in V}\pardist(v) X(v,G).
		\end{equation}
	\end{enumerate}
\end{algorithm}

\begin{theorem}
\label{th:unbiasedness}
Consider a graph $G = (V,E)$.
\begin{enumerate}[I.]
	\item Given $v \in V$, $X(v,G)$ in (\ref{eq:cohen_estimate}) is an unbiased estimate of node influence $\sigma_G(v,T)$, with a variance 
	\begin{multline}
	\label{eq:reduced_variance_bound}
	Var(X(v,G)) \\
	\hspace{0.25cm}\leq \frac{1}{s}\bigg({\frac{\sigma_G(v,T)^2}{m-2} +  \frac{(m-1)Var(\vert\mathcal{N}_G(v,T)\vert)}{m-2}}\bigg)
	\end{multline}
	\label{th:unbiasedness_part1}

	\item The output $\hat{c}(\theta, G)$ of Algorithm \ref{alg:instantaneous_cost_estimation} is an unbiased estimate of the conditional influence function $c(\theta, G)$ defined in (\ref{eq:influence_2}). Variance of $\hat{c}(\theta, G)$ is bounded above by the variance in the case where $L_u$ and $L_{s/2+u}$ (in Step 2 of Algorithm \ref{alg:instantaneous_cost_estimation}) are independently generated.
	\label{th:unbiasedness_part2}
	
\end{enumerate}
\end{theorem}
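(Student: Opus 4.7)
The plan is to exploit two structural facts about Algorithm~\ref{alg:instantaneous_cost_estimation}: first, that conditional on the transmission-time set $L_u$, the Cohen--Dijkstra quotient has a scaled inverse-Gamma distribution with closed-form first two moments; and second, that this quotient is coordinatewise monotone in the driving uniforms $\{U^u_{ji}\}$, so that the antithetic pairing $(L_u,L_{s/2+u})$ induces a non-positive covariance.

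For Part I, fix $v\in V$ and set $S^v_u:=\sum_{j=1}^{m}\bar r^v_{u,j}$ and $Y^v_u:=(m-1)/S^v_u$. The modified Dijkstra step selects the minimum of the $|\mathcal{N}_G(v,T)|$ independent unit exponentials attached to the nodes within $T$-distance of $v$ under transmission times $L_u$. Hence $\bar r^v_{u,j}\mid L_u\sim\mathrm{Exp}(|\mathcal{N}_G(v,T)|)$ independently across $j$, and $S^v_u\mid L_u\sim\mathrm{Gamma}(m,|\mathcal{N}_G(v,T)|)$. Standard inverse-Gamma moments (with $m>2$) give $\mathbb{E}[Y^v_u\mid L_u]=|\mathcal{N}_G(v,T)|$ and $\mathrm{Var}(Y^v_u\mid L_u)=|\mathcal{N}_G(v,T)|^2/(m-2)$. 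The tower property together with $\sigma_G(v,T)=\mathbb{E}_{\pi_{tt}}|\mathcal{N}_G(v,T)|$ from (\ref{eq:influence_of_a_node_on_a_graph}) yields $\mathbb{E}\,Y^v_u=\sigma_G(v,T)$, and the law of total variance produces
\[
\mathrm{Var}(Y^v_u)=\frac{\sigma_G(v,T)^2+(m-1)\mathrm{Var}(|\mathcal{N}_G(v,T)|)}{m-2}.
\]
Decomposing $X(v,G)$ over the $s/2$ mutually independent antithetic pairs gives
\[
\mathrm{Var}(X(v,G))=\frac{\mathrm{Var}(Y^v_u)}{s}+\frac{\mathrm{Cov}(Y^v_u,Y^v_{s/2+u})}{s},
\]
so (\ref{eq:reduced_variance_bound}) reduces to showing the within-pair covariance is non-positive. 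This is where monotonicity enters: raising any coordinate $U^u_{ji}$ lengthens $F_{tt}^{-1}(U^u_{ji})$, which only lengthens shortest paths from $v$ traversing $(j,i)$, hence can only shrink the inclusion-ordered set $\mathcal{N}_G(v,T)$ and only enlarge $\bar r^v_{u,j}$. Consequently $Y^v_u$ is coordinatewise non-increasing in $U^u$; conditioning on the two independent exponential families $\{l_{u,j}\}$ and $\{l_{s/2+u,j}\}$, the estimator $Y^v_{s/2+u}$ is non-increasing in $\mathbf{1}-U^u$ and therefore coordinatewise non-decreasing in $U^u$. The FKG (Harris) inequality for the product measure on $[0,1]^{|E_n|}$, applied to this pair of oppositely monotone functions, yields a non-positive conditional covariance, and integration over the independent exponential families preserves the sign.

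For Part II, unbiasedness is immediate from linearity of expectation applied to (\ref{eq:instantaneous_cost_estimate}) and Part I: $\mathbb{E}\hat c(\theta,G)=\sum_v p_\theta(v)\sigma_G(v,T)=c(\theta,G)$ by (\ref{eq:influence_2}). For the variance comparison, since $p_\theta(v)\ge 0$, the estimator $\hat c(\theta,G)=\sum_v p_\theta(v)X(v,G)$ is a non-negative linear combination of coordinatewise non-increasing functions of $\{U^u_{ji}\}$ and is therefore itself coordinatewise non-increasing. Replaying the FKG step on $\hat c(\theta,G)$ directly shows that the covariance between its antithetic halves is non-positive, i.e.\ the variance under antithetic pairing does not exceed the variance obtained when $L_{s/2+u}$ is replaced by an independent copy.

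The main technical obstacle is the monotonicity claim, because the Cohen--Dijkstra output is defined implicitly through a composition of shortest-path computation, random exponential assignment, and minimization. I would make it rigorous by a pathwise coupling argument: with all other randomness held fixed, an upward perturbation of a single $U^u_{ji}$ produces pathwise inclusion $\mathcal{N}_G^{\mathrm{new}}(v,T)\subseteq\mathcal{N}_G^{\mathrm{old}}(v,T)$ for every $v$, from which $\bar r^v_{u,j}$ can only grow and $Y^v_u$ can only shrink. Once coordinatewise monotonicity is in hand, the remaining ingredients --- the Gamma/inverse-Gamma moment calculation, the tower property, and the FKG inequality for product measures --- are standard.
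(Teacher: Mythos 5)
Your proposal is correct and follows essentially the same strategy as the paper's proof: unbiasedness by the tower property over $L_u$ together with Cohen's conditional moments, and the variance bound by splitting $X(v,G)$ into independent antithetic pairs, applying the law of total covariance, and using coordinatewise monotonicity to force the within-pair covariance to be non-positive. Two differences are worth recording. First, where the paper simply cites \cite{cohen1997} for $\mathbb{E}\{\hat{h}^v(L_u)\mid L_u\}=\vert\mathcal{N}_G(v,T)\vert$ and $\var\{\hat{h}^v(L_u)\mid L_u\}=\vert\mathcal{N}_G(v,T)\vert^2/(m-2)$, you rederive these from the Gamma/inverse-Gamma calculus, which makes the argument self-contained (and implicitly uses $\vert\mathcal{N}_G(v,T)\vert\ge 1$, which holds since $v$ lies in its own $T$-neighborhood). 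Second, you condition in the opposite direction: the paper conditions on the uniforms $\{U^u_{ji}\}$, notes that the two Cohen estimates are conditionally uncorrelated because the exponential families are independent, and is left with $\cov\big({h}^v(L_u),{h}^v(L_{s/2+u})\big)$, to which Lemma \ref{lemma:negative_cor} applies verbatim since it involves the \emph{same} monotone function evaluated at $U$ and $\mathbf{1}-U$; you instead condition on the exponential families and invoke Harris/FKG for two different, oppositely monotone functions of $U^u$. Your route works but leaves one line implicit: the law of total covariance also produces the term $\cov\big(\mathbb{E}\{Y^v_u\mid l\},\mathbb{E}\{Y^v_{s/2+u}\mid l\}\big)$, and ``integration preserves the sign'' does not dispose of it --- you must add that this term vanishes because the two conditional means are functions of the two mutually independent exponential families $\{l_{u,j}\}$ and $\{l_{s/2+u,j}\}$. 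With that observation Part I is complete, and your Part II (nonnegative combinations of coordinatewise monotone functions are monotone, then replay the FKG step) is exactly the ``similar argument'' the paper omits.
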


\begin{proof}
See Appendix \ref{appen:unbiasedness}.
\end{proof}

Theorem \ref{th:unbiasedness} shows that the estimate $X(v,G)$ computed in Algorithm \ref{alg:instantaneous_cost_estimation} has a smaller variance compared to the estimate computed by the Cohen's algorithm based influence estimation method (named CONTINEST) used in \cite{du2017} which has a variance of $\frac{1}{s}\big({\frac{\sigma_G(v,T)^2}{m-2} +  \frac{(m-1)Var(\vert\mathcal{N}_G(v,T)\vert)}{m-2}}\big)$. The reason for this reduced variance is the correlation created by using the same set of uniform random numbers (indexed by $u$) to generate a pair of transmission time sets ($L_u$ and $L_{s/2+u}$). Due to this use of same random number number for multiple realizations, this method is referred to as the method of common random numbers \cite{ross2013simulation}. This reduced variance in the estimates $X(v,G)$ results in a reduced variance in the estimate $\hat{c}(\theta, G)$ of the conditional influence ${c}(\theta, G)$. 

\subsection{Stochastic Optimization Algorithm}
\label{subsec:stoch_optimization}

\begin{algorithm}
	\caption{SPSA  based algorithm to estimate $\theta^*$}
	\label{alg:stochastic_optimization}
	\DontPrintSemicolon 
	\KwIn{Initial parameterization $\theta_0$, Transmission time distribution $\ttdist$, Observations of the graph process $\{G_n\}_{n\geq0}$}
	\KwOut{Estimate of the (locally) optimal solution $\theta^*$ of $C(\theta)$}
	\vspace{0.3cm}
For $k = 0,1,\dots$, execute the following steps.			
	\begin{enumerate}
		\item Simulate the $M$ dimensional vector $d_k$ with random elements 
		$$
		d_k(i)= 
		\begin{cases}
		+1 \text{ \,with probability } 0.5\\
		-1 \text{ \,with probability } 0.5.
		\end{cases}
		$$
		
		\item Set $\theta = \theta_k +  \Delta d_k$ where, $\Delta > 0$.
		
		\item Sample a node from the network using $p_{\theta}$ and, distribute the $2k$\textsuperscript{th} message with the sampled node as seed. 
		
		\item Obtain $\hat{c}(\theta, G_n)$ for $n = 2kN, 2kN+1, \dots, 2kN +\bar{N}-1$ using Algorithm \ref{alg:instantaneous_cost_estimation} and, calculate
		\begin{equation}
		\label{eq:cost_estimate_plus}
			\hat{C}_k(\theta_k +  \Delta d_k ) = \frac{1}{\bar{N}}\sum_{n = 2kN}^{2kN +\bar{N}-1} \hat{c}(\theta,G_n).
		\end{equation}		
		
		\item  Set $\theta = \theta_k  -  \Delta d_k$. Sample a node from the network using $p_{\theta}$ and, distribute the $2k + 1$\textsuperscript{th} message with the sampled node as seed. 

		\item Obtain $\hat{c}(\theta, G_n)$ for $n = (2k+1)N, (2k+1)N+1, \dots, (2k+1)N + \bar{N}-1$ using Algorithm \ref{alg:instantaneous_cost_estimation} and, calculate
		\begin{equation}
		\nonumber
		\label{eq:cost_estimate_minus}
		\hat{C}_k(\theta_k -  \Delta d_k ) = \frac{1}{\bar{N}}\sum_{n = (2k+1)N}^{(2k+1)N +\bar{N}-1} \hat{c}(\theta,G_n).
		\end{equation}

		\item Obtain the gradient estimate,
		\begin{equation}
		\label{eq:gradient_estimate}
		\hat{\nabla}C_k(\theta_k) = \frac{\hat{C}_k(\theta_k + \Delta d_k) - \hat{C}_k(\theta_k - \Delta d_k)}{2\Delta }d_k.
		\end{equation}
		
		\item Update sampling distribution parameter $\theta_k$ via stochastic gradient algorithm
		\begin{equation}
		\label{eq:model_update}
		\theta_{k+1} = \theta_{k} + \epsilon \hat{\nabla}C_k(\theta_k)
		\end{equation} where, $\epsilon  > 0.$
	\end{enumerate}
\end{algorithm}

We propose Algorithm \ref{alg:stochastic_optimization}  for solving the optimization problem (\ref{eq:optimization_problem}), utilizing the conditional influence estimates obtained via Algorithm \ref{alg:instantaneous_cost_estimation}. Algorithm \ref{alg:stochastic_optimization} is based on the Simultaneous Perturbation Stochastic Approximation (SPSA) algorithm (see \cite{spall1992,spall2003} for details). In general, SPSA algorithm utilizes  a finite difference estimate $\hat{\nabla}C_k(\theta_k)$, of the gradient of the function $C(\theta)$ at the point $\theta_k$ in the ($k$\textsuperscript{th} iteration of the) recursion,
\begin{equation}
\label{eq:gradient_algorithm}
\theta_{k+1} = \theta_{k} + \epsilon_k\hat{\nabla}C_k(\theta_k).
\end{equation} 

In the $k$\textsuperscript{th} iteration of the Algorithm \ref{alg:stochastic_optimization}, the influence maximizer  passes a message to the network using a seed node sampled from the distribution $\pardist$ where, $\theta = \theta_k +  \Delta d_k$ (step 3). Sampling with $\pardist$ causes the transition matrix to be become $\parmat$ (recall Remark \ref{remark:functional_dependency}). Then, in step 4, (\ref{eq:cost_estimate_plus}) averages the conditional influence estimates ${\hat{c}(\theta, G_n)}$ over $\bar{N}$ consecutive time instants  (where, $\bar{N}$ is the length of the available sample path as defined in Sec. \ref{subsec:independent_cascade}) to obtain ${\hat{C}_k(\theta_k +  \Delta d_k )}$, which is an asymptotically convergent estimate (by the law of large numbers for Markov Chains \cite{durrett2010_probability,norris1998markov}) of  ${C}(\theta_k +  \Delta d_k )$. Similarly, steps 5 and 6 obtain ${\hat{C}_k(\theta_k -  \Delta d_k )}$, which is an estimate of ${C}(\theta_k -  \Delta d_k )$. Using these influence function estimates,  (\ref{eq:gradient_estimate}) computes the finite difference gradient estimate $\hat{\nabla}C_k(\theta_k)$ in step 7. Finally, step 8 updates the $M$-dimensional parameter vector using the gradient estimate computed in (\ref{eq:gradient_estimate}). Some remarks about this algorithm are as follows. 

\begin{remark}
\normalfont
Algorithm \ref{alg:stochastic_optimization} operates in two nested time scales which are as follows (from the fastest to the slowest):
\begin{enumerate}
	\item $t\,$- continuous time scale on which the information diffusion takes place in a given realization of the graph.
	
	\item $n\,$- discrete time scale on which the graph evolves. 
\end{enumerate} Further, updating the parameter vector takes place periodically over the scale $n$, with a period of $2N$ (where, $N$ is the time duration between two messages as defined in Sec. \ref{subsec:independent_cascade}).
\end{remark}

\begin{remark}
\normalfont
	Note that all elements of the parameter vector are simultaneously perturbed in the SPSA based approach. Therefore, the parameter vector is updated once every two messages. This is in contrast to other finite difference methods such as Kiefer-Wolfowitz method, which requires $2M$ number of messages (where, $M$ is the dimension of the parameter vector as defined in Sec. \ref{subsec:independent_cascade}) for each update. 
\end{remark}

Next, we establish the convergence of Algorithm $\ref{alg:stochastic_optimization}$ using standard results which gives sufficient conditions for the convergence of recursive stochastic gradient algorithms (for details, see \cite{spall2003, kushner2003, krishnamurthy2008}). 
 \begin{theorem}
 \label{th:conditions_SPSA_convergence}
The sequence $\{\theta_k\}_{k\geq 0}$ in (\ref{eq:model_update}) converges weakly to a locally optimal parameter $\theta^*$.
\end{theorem}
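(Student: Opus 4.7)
The plan is to invoke the standard weak convergence machinery for constant step-size stochastic approximation with controlled Markovian noise, as developed in Kushner--Yin and used in the references \cite{spall2003, kushner2003, krishnamurthy2008}. I would first define the piecewise-constant interpolated process $\theta^\epsilon(t) = \theta_k$ for $t \in [k\epsilon,(k+1)\epsilon)$ and then show that, as $\epsilon \to 0$, $\theta^\epsilon(\cdot)$ converges weakly in the Skorokhod sense to the solution of the mean-field ODE
\begin{equation*}
\dot{\theta}(t) = \nabla C(\theta(t)),
\end{equation*}
whose locally asymptotically stable set consists precisely of the local maxima of $C(\theta)$. Weak convergence of $\theta_k$ to $\theta^*$ then follows because, after a burn-in, the interpolated process spends almost all of its time in a neighborhood of such a stable point.

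To apply the weak convergence theorem, I would verify the usual hypotheses in turn. First, I would use the assumption that $C(\theta)$ has Lipschitz continuous gradient, together with the compactness of the parameter domain, to obtain the regularity of the mean ODE and tightness of $\{\theta^\epsilon(\cdot)\}$. Second, I would decompose the gradient estimate $\hat{\nabla}C_k(\theta_k)$ into $\nabla C(\theta_k)$ plus three error terms: (i) the SPSA finite-difference bias, which is $O(\Delta^2)$ by a standard Taylor expansion argument using the simultaneous Bernoulli perturbation $d_k$ and the fact that $\mathbb{E}[d_k(i) d_k(j)] = \delta_{ij}$; (ii) the Cohen-algorithm estimation noise from Algorithm~\ref{alg:instantaneous_cost_estimation}, which by Theorem~\ref{th:unbiasedness} has zero mean conditional on $G_n$ and bounded variance; and (iii) the Markovian sampling noise arising from replacing the stationary expectation $\mathbb{E}_{G \sim \pi_\theta}\{c(\theta,G)\}$ by a finite path average $\frac{1}{\bar{N}}\sum_n \hat{c}(\theta,G_n)$. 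The combined error averages to zero along the algorithm trajectory.

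The main obstacle is term (iii): the Markovian bias. Handling it requires exploiting the regularity of the parameterized transition matrix $P_\theta$ and the unique stationary distribution $\pi_\theta$. Concretely, I would invoke geometric ergodicity of the finite-state chain $\{G_n\}$ uniformly in $\theta$ (which follows from regularity of $P_\theta$ on a compact $\theta$-set) to obtain a Poisson equation solution $\nu_\theta$ satisfying $\nu_\theta(G) - P_\theta \nu_\theta(G) = c(\theta,G) - C(\theta)$, with $\nu_\theta$ Lipschitz in $\theta$. A perturbed Liapunov / martingale decomposition argument (as in Ch.~6--8 of Kushner--Yin) then shows that the Markov-noise contribution to the drift averages out on the $\epsilon^{-1}$ time scale, despite the functional dependence $\theta \mapsto P_\theta$ induced by the active nodes.

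With these pieces in place, the mean drift of the algorithm at parameter $\theta$ is
\begin{equation*}
\mathbb{E}\{\hat{\nabla}C_k(\theta_k) \mid \theta_k = \theta\} = \nabla C(\theta) + O(\Delta^2) + o_{\bar{N}}(1),
\end{equation*}
and the noise terms are square-integrable martingale differences plus a vanishing Markov correction. Applying the weak convergence theorem for stochastic approximation with Markovian noise (e.g., Theorem 8.2.1 of \cite{kushner2003}) yields that $\theta^\epsilon(\cdot)$ converges weakly to the ODE trajectory as $\epsilon \to 0$ (and $\Delta \to 0$), and hence $\{\theta_k\}$ converges weakly to a locally optimal $\theta^*$, completing the proof.
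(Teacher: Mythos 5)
Your proposal is correct and follows essentially the same route as the paper: both invoke the Kushner--Yin weak-convergence (ODE) method for constant-step-size stochastic approximation, verifying uniform integrability/tightness of the gradient estimates, an averaging condition for the Markov-modulated noise, and uniqueness of solutions of the limit ODE $\dot{\theta} = \nabla C(\theta)$ via the Lipschitz-gradient assumption. The only difference is one of detail: where the paper checks uniform integrability by an explicit bound (using the unbiasedness from Theorem~\ref{th:unbiasedness} and the finiteness of $V$ and $\mathcal{G}$) and verifies the averaging condition by appealing directly to the law of large numbers for ergodic Markov chains as $\bar{N}\to\infty$ and $\Delta\to 0$, you substantiate that same averaging step with a Poisson-equation/perturbed-Liapunov argument.
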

\begin{proof}
	See Appendix \ref{appen:conditions_SPSA_convergence}.
\end{proof}

\section{Stochastic Optimization Method: Partially Observed Graph Process with Known Dynamics}
\label{sec:stochastic_optimization_partially_observed_graph}
In this section, we assume that the influence maximizer  can observe only a small part of the social network graph at each time instant. The aim of this section is to combine the stochastic optimization framework proposed in Sec. \ref{sec:stochastic_optimization_perfectly_observed_graph} to this partially observed setting.

\subsection{Partially Observed Graph Process}
\label{subsec:partially_observed_graph_process}
In some applications, the influence maximizer can observe only a small part of the full network $G_n$, at any time instant $n$. Let $G^{\bar{V}}$ denote the subgraph of $G = (V,E)$, induced by the set of nodes $\bar{V} \subseteq V$. Then, we consider the case where, the observable part is the subgraph of $G_n$ which is induced by a fixed subset of nodes $\bar{V} \subseteq V$ i.e.  the observable part at time $n$ is $G_n^{\bar{V}}$ ($G^{\bar{V}}$ denotes the subgraph of $G$, induced by the set of nodes $\bar{V}$)\footnote{For example consider the friendship network $G_n= (V, E_n)$ of all the high school students in a city at time $n$. The smaller observable part could be the friendship network formed by the set of students in a particular high school $V'  \subset V$, which is a subgraph of the friendship network $G_n$. The influence maximizer then needs to perform influence maximization by observing this subgraph.}. Then, the observation space $\bar{\mathcal{G}}$ of the  Markovian graph process can be defined as, 
\begin{equation}
\bar{\mathcal{G}} = \bigcup_{G \in \mathcal{G}} G^{\bar{V}},
\end{equation}
which  consists of the subgraphs induced by $\bar{V}$ in each graph $G \in \mathcal{G}$ (the case ${\bar{V} = V}$ corresponds to the perfectly observed case). For each $G \in \mathcal{G}$  and $\bar{G} \in \bar{\mathcal{G}}$, the observation likelihoods, denoted by $B_{G\bar{G}}$  are defined as,
\begin{align}
B_{G\bar{G}} &= \mathbb{P}(G_n^{\bar{V}} = \bar{G} \,\bigr\vert\, G_n = G).
\end{align}
In our system model, these observation likelihoods can take only binary values: 
\begin{equation}
  B_{G\bar{G}} =
  \begin{cases}
    1 & \text{if $G^{\bar{V}} = \bar{G}$} \\
    0 & \text{otherwise}
  \end{cases}.
\end{equation}
i.e. $B_{G\bar{G}} = 1$ if the subgraph formed by the set of nodes $\bar{V}$ in graph $G \in \mathcal{G}$ is $\bar{G} \in \mathcal{Y}$ and, $0$ otherwise. In this setting, our main assumption is the following. 
\begin{assumption}
	\label{assumption:partially_observed_graph}
	The measurement likelihood matrix $B$ and the parameterized transition probability matrix $\parmat$, are known to the influence maximizer but, the (finite) sample paths of the Markovian graph process are observed in noise.  
\end{assumption}

Then, the aim of the influence maximizer is to recursively solve the problem of influence maximization given in (\ref{eq:optimization_problem}), utilizing the information assumed to be known in Assumption \ref{assumption:partially_observed_graph}. However, solving (\ref{eq:optimization_problem}) is non-trivial even under the Assumption \ref{assumption:partially_observed_graph}, as emphasized in the following remark. 

\begin{remark}
	\normalfont
	 Even when $\parmat$ is known, it is intractable to compute $\parstatdist$ in closed form for many cases \cite{krishnamurthy2016}. Hence, assumption \ref{assumption:partially_observed_graph} does not imply that a closed form expression of the objective function in (\ref{eq:influence_1}) can be obtained and hence, solving 	(\ref{eq:optimization_problem}) still remains a non-trivial problem. 
\end{remark}

\subsection{Randomized Influence Maximization using HMM Filter Estimates}
\label{subsec:hmm_filter}

\begin{algorithm}
	\caption{Hidden Markov Model (HMM) Filter Algorithm for Tracking the Graph Process}
	\label{alg:hmm_filter}
	\DontPrintSemicolon 
	\KwIn{$\parmat, B$ and intial prior $\hmmprior$}
	\KwOut{Finite sample estimate $ \hat{C}^\theta_N$ of the influence function $C(\theta)$}
	
	\vspace{0.3cm}
	
	\begin{enumerate}
		\item For every time instant $n = 1, 2, \dots, N-1$, given observation $G^{\bar{V}}_{n+1}$, update the $\vert \mathcal{G}\vert$-dimensional posterior:
		\begin{equation}
		\label{eq:hmm_recursion}
			\pi_{n+1}^{\theta} = T(\pi_n,G^{\bar{V}}_{n+1}) = \frac{B_{G^{\bar{V}}_{n+1}}P'_\theta\hmmpost}				{\sigma(\hmmpost,G^{\bar{V}}_{n+1})}
		\end{equation} where,
		\begin{equation}
			\sigma(\hmmpost,G^{\bar{V}}_{n+1}) = \mathbf{1}'B_{G^{\bar{V}}_{n+1}}P'_\theta\hmmpost
		\end{equation}
		and, $\mathbf{1}$ denotes the column vector with elements equal to one. 
		\item Compute the estimate of the influence function $C(\theta)$,
			\begin{equation}
			\label{eq:hmm_cost_est}
			 \hat{C}^\theta_N = \frac{\sum_{n = 0}^{N-1}\hat{c}_\theta'\hmmpost}{N}.
			\end{equation} where, $\hat{c}_\theta$ denotes the column vector with elements $\hat{c}(\theta, G_i), \, G_i\in \mathcal{G}$.
		
	\end{enumerate}
		
\end{algorithm}

Assumption \ref{assumption:partially_observed_graph} made in \ref{subsec:partially_observed_graph_process}  makes it possible to implement an HMM filter (see \cite{krishnamurthy2016} for a detailed treatment of HMM filters and related results). The HMM filter is a finite dimensional Bayesian filter which recursively (with each observation) computes $\hmmpost$ which is the probability distribution  of the state of the graph, conditional on the sequence of  observations $\{G^{\bar{V}}_{0},\,\dots, G^{\bar{V}}_{n}\}$. Algorithm \ref{alg:hmm_filter} gives the HMM filter algorithm and, Theorem \ref{th:asymptotic_unbiasedness} establishes the asymptotic convergence of the influence function estimate obtained from from it.  

\begin{theorem}
\label{th:asymptotic_unbiasedness}
The finite sample estimate $\hat{C}^\theta_N$  of the influence function obtained in (\ref{eq:hmm_cost_est}) is an asymptotically unbiased estimate of the influence function $C(\theta)$
i.e. 
\begin{equation}
\lim_{N\to\infty} \mathbb{E}\{\hat{C}^\theta_N\} = C(\theta).
\end{equation}
Further, $\{\theta_k\}_{k\geq 0}$ in (\ref{eq:model_update}) converges weakly to a locally optimal parameter $\theta^*$, when estimates computed in steps 4 and 6 of Algorithm \ref{alg:stochastic_optimization} are replaced by the estimates obtained using the using Algorithm \ref{alg:hmm_filter}. 

\end{theorem}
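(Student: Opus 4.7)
The plan is to prove the two assertions separately, leveraging Theorems~\ref{th:unbiasedness} and~\ref{th:conditions_SPSA_convergence} as building blocks.

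\textbf{Part 1 (asymptotic unbiasedness).} First I would rewrite the inner product $\hat{c}_\theta' \pi_n^\theta$ as a conditional expectation using the defining property of the HMM filter: letting $\mathcal{Y}_n = \sigma(G_0^{\bar V},\dots,G_n^{\bar V})$ denote the observation sigma-algebra, the recursion (\ref{eq:hmm_recursion}) gives $\pi_n^\theta(G) = \mathbb{P}(G_n=G \mid \mathcal{Y}_n)$, so
\begin{equation*}
\hat{c}_\theta' \pi_n^\theta \;=\; \mathbb{E}\bigl\{\hat{c}(\theta,G_n)\,\bigl|\,\mathcal{Y}_n\bigr\}.
\end{equation*}
Taking total expectations and applying the tower property termwise yields
\begin{equation*}
\mathbb{E}\{\hat{C}_N^\theta\} \;=\; \frac{1}{N}\sum_{n=0}^{N-1} \mathbb{E}\{\hat c(\theta,G_n)\} \;=\; \frac{1}{N}\sum_{n=0}^{N-1} \mathbb{E}\{c(\theta,G_n)\},
\end{equation*}
where the second equality uses the unbiasedness of Algorithm~\ref{alg:instantaneous_cost_estimation} established in Theorem~\ref{th:unbiasedness}\ref{th:unbiasedness_part2} (conditioning on $G_n$). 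Since $P_\theta$ is regular, $G_n \Rightarrow \pi_\theta$ in distribution geometrically fast, and since $c(\theta,\cdot)$ is bounded on the finite state space $\mathcal{G}$, $\mathbb{E}\{c(\theta,G_n)\}\to\mathbb{E}_{G\sim\pi_\theta}\{c(\theta,G)\}=C(\theta)$. A Ces\`aro averaging argument then gives $\lim_{N\to\infty}\mathbb{E}\{\hat C_N^\theta\}=C(\theta)$.

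\textbf{Part 2 (weak convergence of $\{\theta_k\}$).} My plan is to re-invoke the weak-convergence framework that underlies Theorem~\ref{th:conditions_SPSA_convergence}, and simply verify that substituting the HMM-filter estimate for the direct sample average does not break its hypotheses. Writing the filter-based gradient estimator in the decomposition
\begin{equation*}
\hat{\nabla}C_k(\theta_k)=\nabla C(\theta_k)+b_k(\theta_k,\Delta)+\xi_k,
\end{equation*}
I would show (i) the deterministic SPSA finite-difference bias $b_k$ satisfies $b_k=O(\Delta)$ uniformly in $k$ as in the proof of Theorem~\ref{th:conditions_SPSA_convergence}, and (ii) the stochastic error $\xi_k$ splits into a Markovian noise contribution (from the graph sample path driving the filter) and a filter-innovation martingale difference. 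Part~1 above already controls the mean of these contributions; their second moments are uniformly bounded because $\hat c_\theta$ lives in a finite-dimensional bounded cone indexed by $\mathcal{G}$ and the HMM filter is a contraction on $\Delta(\mathcal{G})$. This places Algorithm~\ref{alg:stochastic_optimization} (with the filtered estimates in Steps~4 and~6) into the same Kushner--Yin setting as in Appendix~\ref{appen:conditions_SPSA_convergence}, so the interpolated iterates track the mean ODE $\dot\theta=\nabla C(\theta)$ and converge weakly to the set of local maxima of $C(\theta)$.

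The step I expect to be hardest is handling the correlations between the HMM posterior $\pi_n^\theta$ and the graph sample path $\{G_n\}$ when verifying the averaging/tightness conditions: unlike the perfectly observed case, the ``noise'' at iteration $k$ is a nonlinear functional of the entire observation history, not just of the current state. The standard remedy, which I would adopt, is to use the geometric forgetting property of the HMM filter (so that the effect of the prior decays exponentially) together with a perturbed Liapunov function argument to absorb the filter-induced correlations into a vanishing higher-order term, reducing the analysis back to the Markovian-noise SPSA setting of Theorem~\ref{th:conditions_SPSA_convergence}.
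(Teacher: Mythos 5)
Your proposal is correct and follows essentially the same route as the paper: for Part 1, the tower-property/unbiasedness reduction to $\frac{1}{N}\sum_n \mathbb{E}\{c(\theta,G_n)\}$ is exactly the paper's computation (the paper writes $\mathbb{E}\{\pi_n^\theta\}=(P_\theta^n)'\pi_0$ explicitly) followed by the same Ces\`aro argument, and for Part 2 the paper likewise re-invokes the Kushner--Yin framework of Theorem~\ref{th:SA} (in fact it omits the details you sketch, so your noise decomposition and filter-forgetting discussion is more explicit than what appears in the paper).
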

\begin{proof}
See Appendix \ref{appen:asymptotic_unbiasedness}
\end{proof}

Finally, the Algorithm \ref{alg:stochastic_optimization} can be modified by replacing the estimates computed in steps 4 and 6 with the influence function estimates obtained using the HMM filter in Algorithm \ref{alg:hmm_filter}. 

\section{Numerical Results}
\label{sec:numerical_results}

\begin{figure*}[!htb]
	\centering
	\begin{subfigure}[!h]{0.5\textwidth}
		\centering
		\includegraphics[width=3.5in]{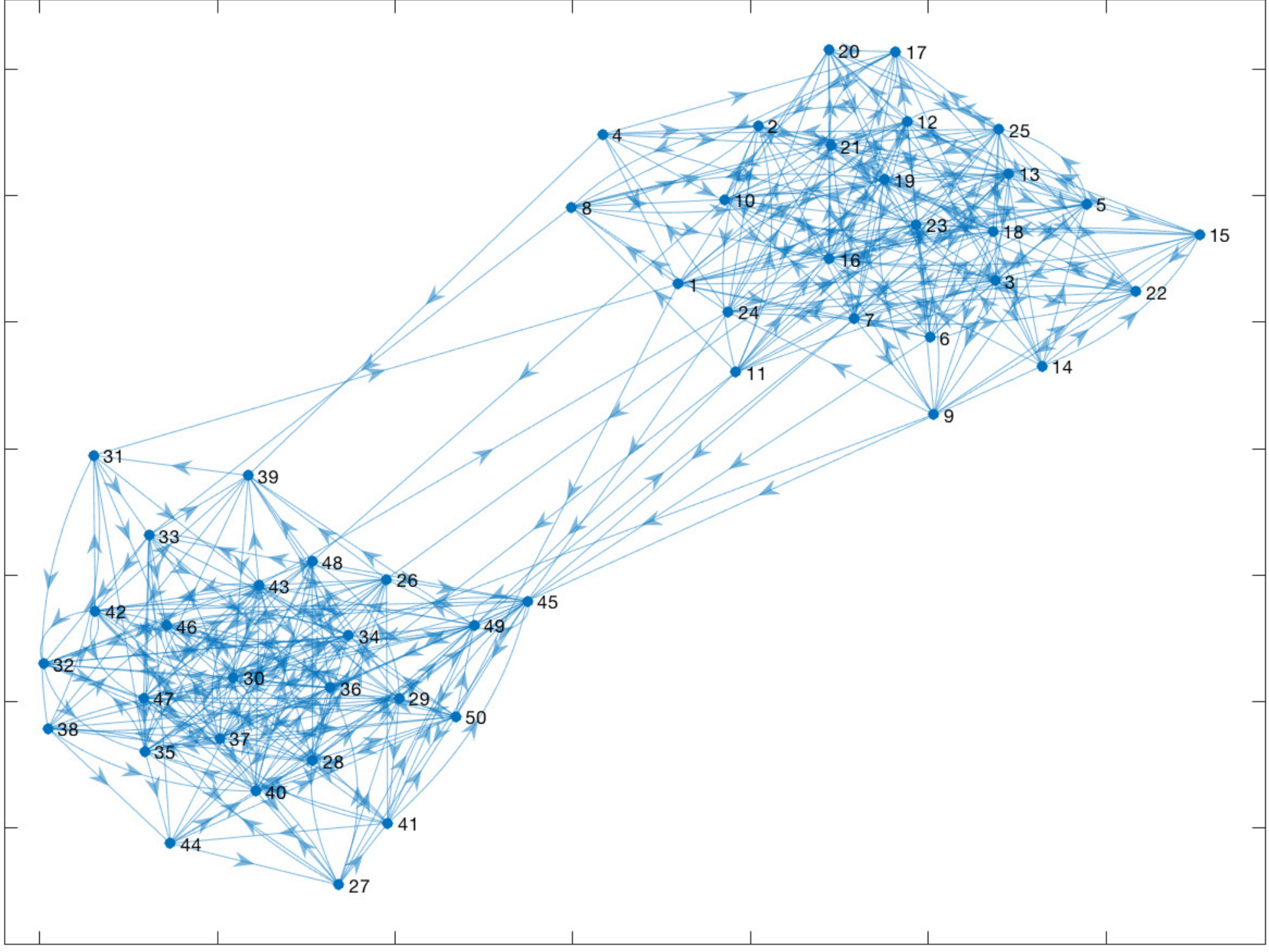}
		\caption{Graph $G^1 = (V, E^1)$ with two equal sized clusters }
		\label{subfig:G1}
	\end{subfigure}\hfill
	\begin{subfigure}[!h]{0.5\textwidth}
		\centering
		\includegraphics[width=3.5in]{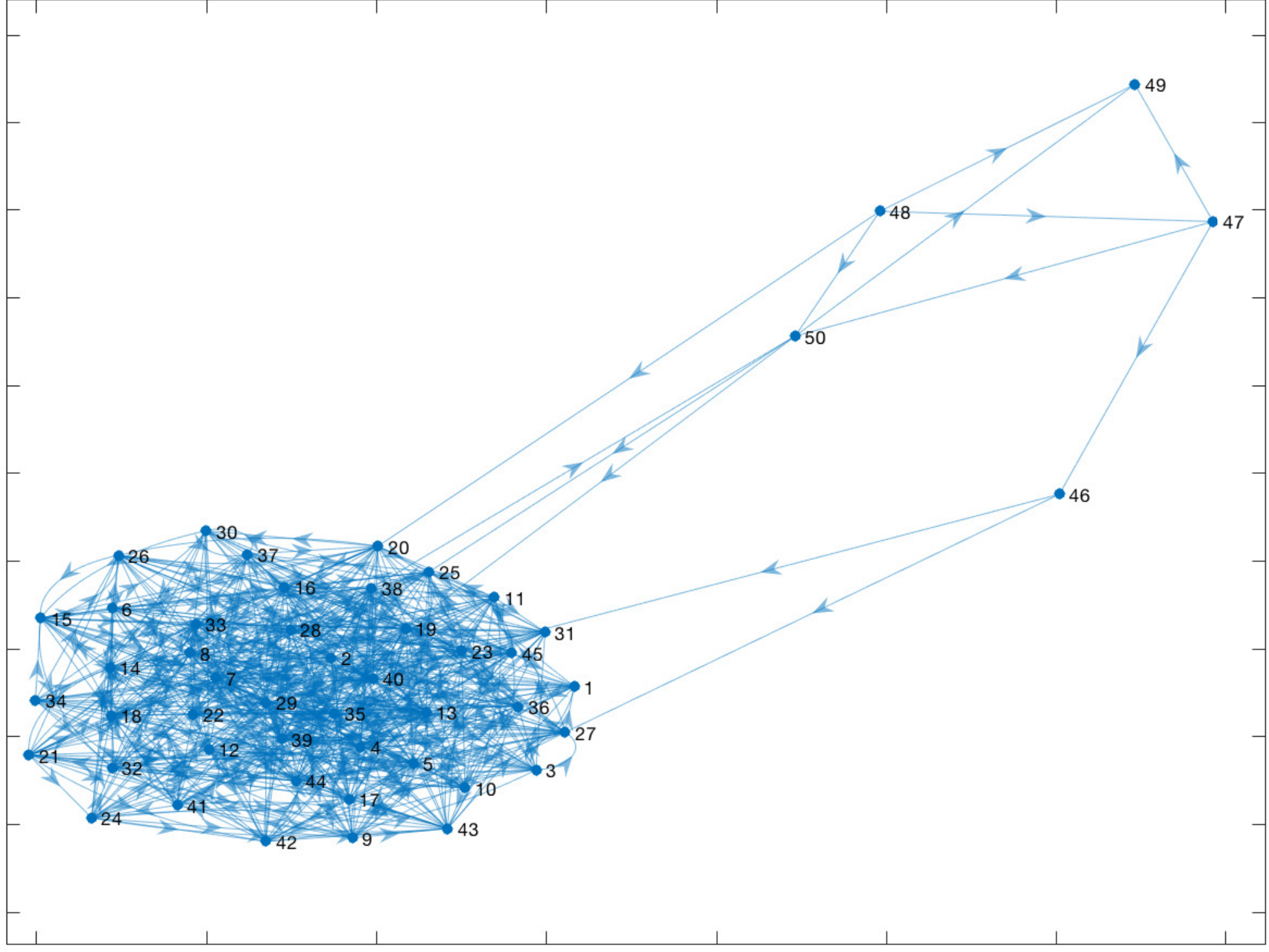}
		\caption{Graph $G^2 = (V, E^2)$ with a single large cluster}
		\label{subfig:G2}
	\end{subfigure}
	\caption{State space $\mathcal{G}$, of the Markovian graph process $\{G_n\}_{n\geq 0}$}
	\label{fig:states_of_graph_process}
\end{figure*}

In this section, we apply the stochastic optimization algorithm presented in Sec. \ref{sec:stochastic_optimization_perfectly_observed_graph} to an example setting and, illustrate its convergence with a feasible number of iterations. 

\subsection{Experimental Setup}
\label{subsec:experimental_setup}

We use the Stochastic Block Model (SBM) as a generative models to create the graphs used in this section. These models have been widely studied in statistics \cite{zhao2012consistency, abbe2016exact, rohe2011spectral, lelarge2015reconstruction, fishkind2013consistent} and network science \cite{karrer2011stochastic, wilder2017influence} as  generative models that closely resemble the real world networks. 
 
\vspace{0.1cm}
{\bf State space of the graph process: } We consider the  graph process obtained by Markovian switching between the two graphs in Fig. \ref{fig:states_of_graph_process}: a graph where two dense  equal sized clusters exist (Graph $G^1$) and, a graph where most of the nodes (45 out of 50) are in a single dense  cluster (Graph $G^2$).  These graphs are sampled from SBM models with the following parameter values: $G^1$  with cluster sizes 25, 25 with, within cluster edge probability $p_w^{SBM} = 0.3$, between cluster edge probability $p_b^{SBM} = 0.01$ and, $G^2$ with cluster sizes 45, 5 with, within cluster edge probability $p_w^{SBM} = 0.3$, between cluster edge probability $p_b^{SBM} = 0.01$. This graph process is motivated by the clustered and non-clustered states of a social network.

\vspace{0.1cm}
{\bf Sampling distribution and the graph evolution:} We consider the case where the influence maximizer samples from a subset of nodes that consists of the two nodes indexed by $v_1$ and $v_2$ using the parameterized probability distribution ${\pardist = [\cos^2(\theta)\quad \sin^2(\theta)]}'$. These two nodes are located in different clusters in the graphs $G^1, G^2$. Also, the transition probabilities may depend on this sampling distribution (representing for example, the adversarial networks/nodes as explained in Sec. \ref{subsec:practical_example}). However, exact functional characterizations of such dependencies are not known in many practical applications. Also, the form of these dependencies may change over time as well (recall Remark \ref{remark:functional_dependency}).  This experimental setup considers a graph process with a stationary distribution $\parstatdist = \pardist$ in order to have a closed form influence function as a ground truth (in order to compare the accuracy of the estimates). In an actual implementation of the algorithm, this functional dependency need not be known to the influence maximizer. 

\vspace{0.1cm}
{\bf Influence Functions: } The transmission time distribution was selected to be an exponential distribution with mean $1$ for edges within a cluster and, an exponential distribution with mean $10$ for between cluster edges (in the SBM). Further, the message expiration time was selected to be $T =  1.5$. Then, the influences of nodes $v_1, v_2$ on graphs $G_1$ and $G_2$ were estimated to be as follows by evaluating the integral in Definition \ref{defn:influence_on_G} with a naive simulation method (repeated use of the shortest path algorithm): $\sigma_{G^1}(v_1,T) = 25.2,\, \sigma_{G^1}(v_2,T) = 23.2,\,\sigma_{G^2}(v_1,T) = 45.1,\, \sigma_{G^2}(v_2,T) = 5.	8$. These values, along with the expressions for $p_\theta$ and $\parstatdist$ were used to obtain the following expression for the influence function (defined in Definition \ref{defn:influence_function}) to be compared with the outputs of the algorithm estimates:
\begin{align}
C(\theta) &= \parstatdist(G^1)(\sigma_{G^1}(v_1,T)p_\theta(v_1) + \sigma_{G^1}(v_2,T)p_\theta(v_2))\nonumber\\ &\hspace{0.75cm}+\parstatdist(G^2)(\sigma_{G^2}(v_1,T)p_\theta(v_1)+\sigma_{G^2}(v_2,T)p_\theta(v_2) )\nonumber\\
&= 17.9\sin^2(\theta) - 37.3\sin^4(\theta) + 25.2.
\end{align} In this context, the goal of our algorithm is to locate the value of $\theta$ which maximizes this function, without using knowledge of $C(\theta)$ or $\parstatdist$.

\subsection{Convergence of the Recursive Algorithm for Influence Maximization}
\begin{figure}[!h]
	\centering
	\includegraphics[width=\linewidth]{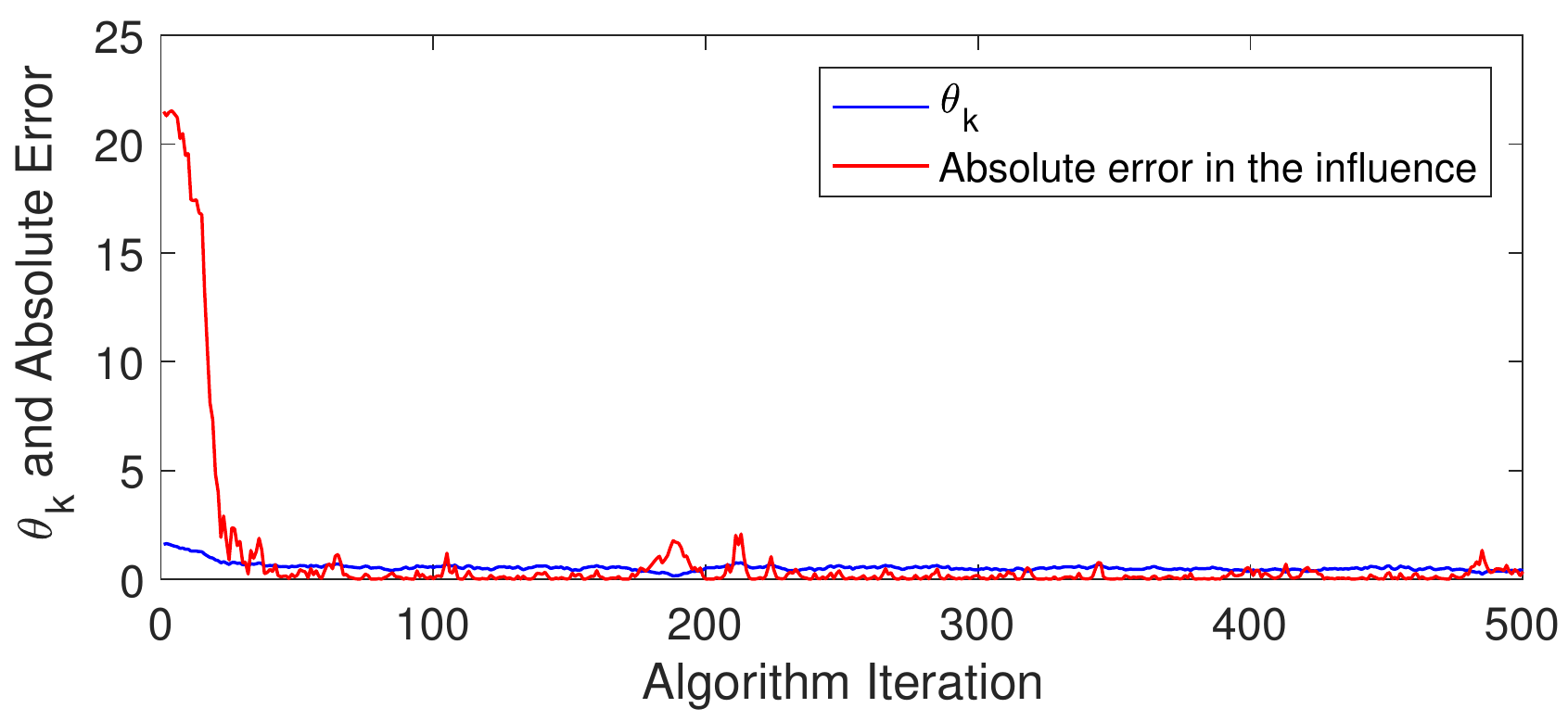}
	\caption{Parameter value $\theta_k$ and the absolute value of the error (difference of current and maximum influence) versus algorithm iteration, showing the convergence of the algorithm in a feasible number of iterations. }
	\label{fig:convergence}
\end{figure}

Algorithm \ref{alg:stochastic_optimization} was utilized in an experimental setting with the parameters specified in the Sec. \ref{subsec:experimental_setup}. For this, the length of the observable sample path of the graph process was assumed to be $\bar{N} = 30$. Further, in the Algorithm \ref{alg:instantaneous_cost_estimation}, the number of transmission time sets ($s$) and the number of exponential random variable sets ($m$) were both set to be $10$. 

With these numerical values for the parameters, Fig. \ref{fig:convergence} shows the variation of the absolute error (absolute value of the difference between the current and maximum influence) and the parameter value, against the iteration of the algorithm.  From this, it can be seen that the algorithm finds optimal parameter in  less than 50 iterations. Further, any change in the system model will result in a suboptimal (expected) influence only for 50 iterations since the algorithm is capable of tracking the time evolving optima. Hence, this shows that the stochastic optimization algorithm is capable of estimating the optimal parameter value with a smaller (less than 50) number of iterations.

\subsection{Effect of variance reduction in convergence and tracking the optima of a time-varying influence function}
\begin{figure}[!h]
\centering
\includegraphics[width=\linewidth]{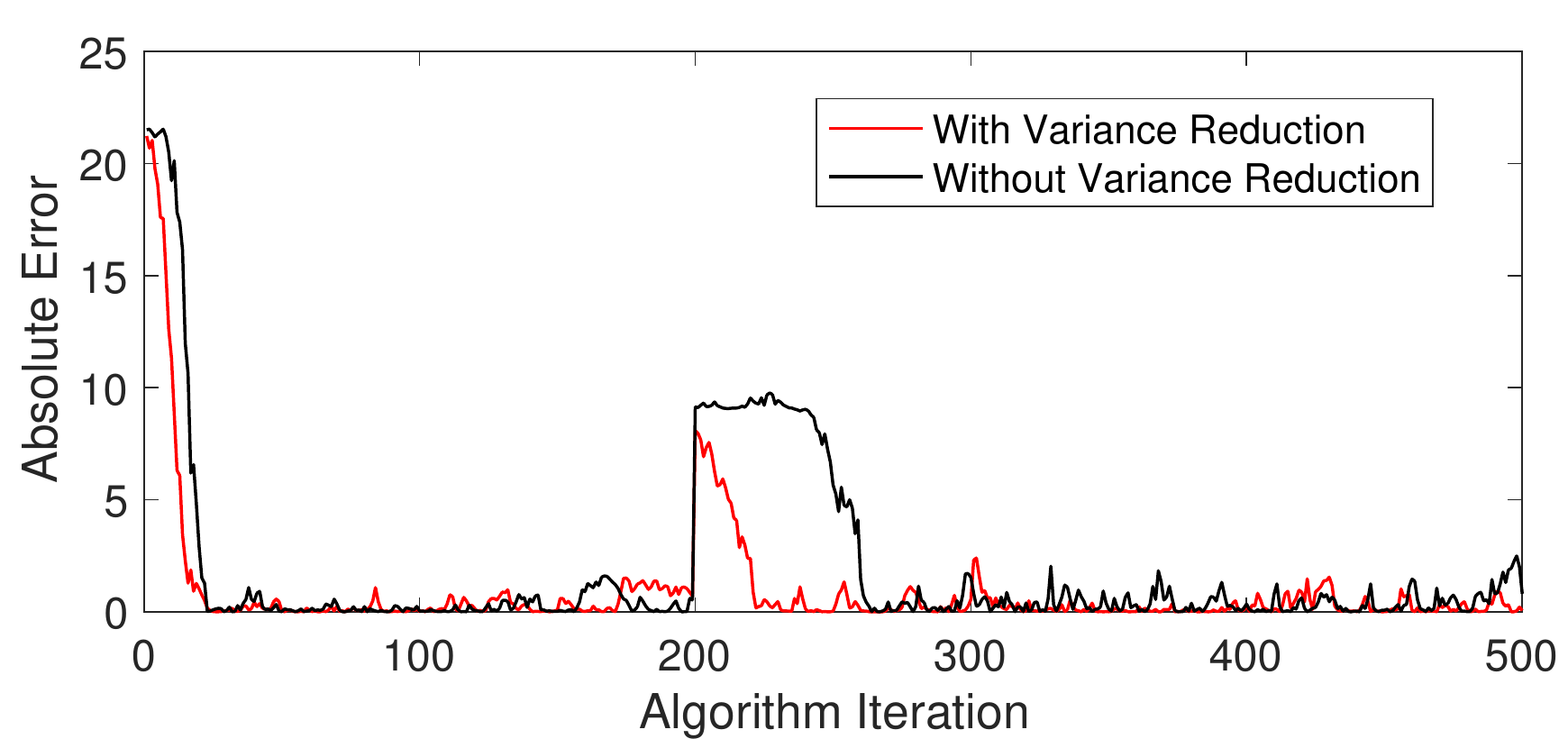}
\caption{Absolute value of the error versus algorithm iteration with and without the common random number based variance reduction. This depicts the importance of reduced variance in tracking the optima of a time evolving influence function.}
\label{fig:tracking_difference}
\end{figure}

Here we aim to see how the proposed stochastic approximation algorithm can track the optima when the system model changes on a slower time scale and, the effect of reduced variance Algorithm \ref{alg:instantaneous_cost_estimation} on the accuracy of tracking. For this, the experimental setup described in \ref{subsec:experimental_setup} was utilized again and, a sudden change in the influence function (by changing the the state space of the graph process and the functional dependency) was introduced at the iteration number 200. In this setting, Fig. \ref{fig:tracking_difference} depicts the variation of the absolute error in influence with the algorithm iteration for two cases: with the reduced variance Algorithm \ref{alg:instantaneous_cost_estimation} (red curve) and without the variance reduction approach (black curve). It can be seen from Fig. \ref{fig:tracking_difference} that the variance reduction method improves the speed of convergence to the optima initially (iterations 1 to 50) and, also in tracking the optima after a sudden change in the influence function (iterations 200 to 300). Further, after convergence (between iterations 50 to 200 and 300 to 500), it can be seen that reduced variance approach is less noisy compared to the method without variance reduction. Hence, this shows that the proposed approach is capable of tracking the optimal sampling distribution in a slowly evolving system such as, varying graph state space, evolving functional dependencies, etc.

\section{Conclusion}
\label{sec:conclusion}
This paper considered the problem of randomized influence maximization over a Markovian Graph Process: given a fixed set of nodes whose connectivity graph is evolving as a Markov chain, estimate the probability distribution (over this fixed set of nodes) that samples a node which will initiate the largest information cascade (in expectation). The evolution of the graph was allowed to functionally depend on the sampling  probability distribution in order to keep the problem more general. This was formulated as a problem of tracking  the optimal solution of a (time-varying) optimization problem where, a closed form expression of the objective function (influence function) is not available. In this setting, two stochastic gradient algorithms were presented to estimate the optimal sampling distribution for two cases: 1) transition probabilities of the graph are unknown but, the graph can be observed perfectly 2) transition probabilities of the graph are known but, the graph is observed in noise. These algorithms are based on the Simultaneous Perturbation Stochastic Approximation Algorithm that requires only the noisy estimates of the influence function. These noisy estimates of the influence function were obtained by combining a neighborhood size estimation algorithm with a variance reduction method and then, averaging over a finite sample path of the graph process. The convergence of the proposed methods were established theoretically and, illustrated with numerical examples. The numerical results show that, with the reduced variance approach, the algorithms are capable of tracking the optimal influence in a time varying system model (with changing graph state spaces, etc.).


%

\appendices

\section{Proof of Theorem \ref{th:unbiasedness}}
\label{appen:unbiasedness}
Let the size of the $T-$distance neighborhood $\mathcal{N}_G(v,T) $ of a node $v \in V$ of graph $G = (V,E)$ conditional on a transmission time set $L_u = {\{\tau_{ji}\}^u_{(ji) \in E}}$ be denoted by $h^v(L_u)$. Further, Let $\hat{h}^v(L_u)$ denote $\frac{m-1}{\sum_{j= 1}^{m}\bar{r}^{v}_{u,j}}$. 
\begin{align}
\mathbb{E}\{X(v,G)\} &= \mathbb{E}\bigg\{\frac{1}{s} \sum_{u = 1}^{s} \frac{m-1}{\sum_{j= 1}^{m}\bar{r}^{v}_{u,j}}\bigg\}\\
									&= \frac{1}{s} \sum_{u = 1}^{s} \mathbb{E}\{{\hat{h}^v(L_u)}\}\\
									&= \frac{1}{s} \sum_{u = 1}^{s} \mathbb{E}\{\mathbb{E}\{\hat{h}^v(L_u)\vert L_u\}\}\\	
									&= \frac{1}{s} \sum_{u = 1}^{s} \mathbb{E}\big\{ h^v(L_u)\big\}\\	
									&\text{ (from conditional unbiasedness proved in \cite{cohen1997})}\nonumber\\	
									&= \sigma_G(v,T) \text{ (by (\ref{eq:influence_of_a_node_on_a_graph}))}
\end{align} 

To analyze the variance of $X(v,G)$, first note that $h^v(\{\tau_{ji}\}^u_{(ji) \in E})$ is a monotonically decreasing function of all its elements $\{\tau_{ji}\}^u_{(ji) \in E}$. Further, the following result about monotone functions of random variables from \cite{ross2013simulation} will be used to establish the result. 
\begin{lemma}
	\label{lemma:negative_cor}
	If $g(x_1, ... , x_n)$ is a monotone function of each of its arguments, then, for a set $U_1 , ... , U_n$ of independent random numbers.
	\begin{equation}
	\cov(g(U_1 , ... , U_n),g(1-U_1 , ... , 1-U_n)) \leq 0.
	\end{equation}
\end{lemma}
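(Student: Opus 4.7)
The plan is to prove Lemma \ref{lemma:negative_cor} by induction on $n$, after first reducing to the case where $g$ is non-decreasing in every argument. If $g$ is non-decreasing on an index set $I$ and non-increasing on its complement, I would substitute $U'_i = U_i$ for $i \in I$ and $U'_i = 1 - U_i$ for $i \notin I$; the vector $U'$ still consists of iid uniforms, and the transformed function $\tilde g$ is non-decreasing in every argument, while $g(U) = \tilde g(U')$ and $g(1-U) = \tilde g(1 - U')$. So it suffices to treat the coordinatewise non-decreasing case.

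The induction needs to be phrased in a slightly strengthened form, because conditioning on $U_n$ produces two different sections of $g$. I would therefore prove the following stronger statement: if $h, k : [0,1]^n \to \mathbb{R}$ are such that $h$ is non-decreasing in each argument and $k$ is non-increasing in each argument, then $\cov(h(U), k(U)) \leq 0$ for $U = (U_1, \ldots, U_n)$ iid uniform. The lemma then follows by taking $h(x) = g(x)$ and $k(x) = g(1 - x)$.

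For the base case $n = 1$, I would use the standard iid-copy trick: letting $U, U'$ be independent copies, the product $(h(U) - h(U'))(k(U) - k(U'))$ is non-positive pointwise because the two factors have opposite signs (one non-decreasing, one non-increasing). Taking expectation and expanding gives $2\cov(h(U), k(U)) \leq 0$. For the inductive step, I would condition on $U_n$ and apply the law of total covariance:
\begin{equation*}
\cov(h(U), k(U)) = \mathbb{E}\!\left[\cov(h(U), k(U) \mid U_n)\right] + \cov\!\left(\mathbb{E}[h(U) \mid U_n],\, \mathbb{E}[k(U) \mid U_n]\right).
\end{equation*}
The first term is non-positive by the inductive hypothesis applied to the functions $h(\cdot, u_n)$ and $k(\cdot, u_n)$ on $[0,1]^{n-1}$, which still have the opposite monotonicity pattern. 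For the second term, the conditional expectation $u_n \mapsto \mathbb{E}[h(U) \mid U_n = u_n]$ is non-decreasing (averaging over the first $n-1$ coordinates preserves monotonicity in the last one), and analogously $u_n \mapsto \mathbb{E}[k(U) \mid U_n = u_n]$ is non-increasing, so the $n=1$ base case applied to these two functions of $U_n$ shows the second term is also non-positive.

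The main obstacle, as foreshadowed above, is setting up the right inductive hypothesis: the naive statement (with a single $g$ on both sides) breaks under conditioning because the sections of $g$ and $g(1 - \cdot)$ at a fixed $U_n = u_n$ are distinct functions. Strengthening the hypothesis to allow any pair $(h, k)$ with opposite coordinatewise monotonicity removes this obstruction and makes both terms of the total-covariance decomposition fit cleanly into the induction. Everything else is a routine consequence of the monotone-coupling identity.
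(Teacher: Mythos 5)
Your proof is correct. Note that the paper does not actually prove this lemma; it imports it verbatim from Ross's simulation text, and your argument --- reduce to the coordinatewise non-decreasing case, strengthen to a pair $(h,k)$ of oppositely monotone functions, prove $n=1$ via the iid-copy (Chebyshev association) trick, and induct using the law of total covariance --- is essentially the standard textbook proof of that cited result. Your observation that the inductive hypothesis must be strengthened to two functions (since the sections of $g$ and $g(1-\cdot)$ at fixed $U_n$ are distinct) is exactly the right point, and the rest goes through as you describe.
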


 Now, consider the variance of $\frac{\hat{h}^v(L_u) + h(L_{s/2+u})}{2}$ where $L_u$ and $L_{s/2+u}$ are the pair of correlated transmission time sets as defined in (\ref{eq:correlated_transmission_time_sets_1}) and (\ref{eq:correlated_transmission_time_sets_2}).
\begin{multline}
\var\bigg(\frac{\hat{h}^v(L_u) + \hat{h}(L_{s/2+u})}{2}\bigg) = \\
\frac{1}{4}\bigg(\var\big(\hat{h}^v(L_u)\big) + \var\big(\hat{h}^v(L_{s/2+u})\big) \\+  2\cov\big(\hat{h}^v(L_u),\hat{h}^v(L_{s/2+u})\big)\bigg)
\end{multline}
\begin{align}
&=\frac{1}{2}\bigg(\var\big(\hat{h}^v(L_u)\big) +  \cov\big(\hat{h}^v(L_u),\hat{h}^v(L_{s/2+u})\big)\bigg)\label{eq:variance_correlated}\\
&\text{ (since $\hat{h}^v(L_u)$ and $\hat{h}^v(L_{s/2+u})$ are identically distributed)}\nonumber	
\end{align}
Now consider $\cov\big(\hat{h}^v(L_u),\hat{h}^v(L_{s/2+u})\big)$. By using the law of total covariance,
\begin{align}
&\cov\big(\hat{h}^v(L_u),\hat{h}^v(L_{s/2+u})\big) =\nonumber\\
&\hspace{0.25cm}\mathbb{E}\big\{\cov\big(\hat{h}^v(L_u),\hat{h}^v(L_{s/2+u})\big\vert  \{U^u_{ji}\}_{(ji) \in E}  \big)\big\} + \nonumber\\
&\hspace{0.25cm}\cov\big(\mathbb{E}\{\hat{h}^v(L_u)\vert \{U^u_{ji}\}_{(ji) \in E}\},\mathbb{E}\{\hat{h}^v(L_{s/2+u})\vert \{U^u_{ji}\}_{(ji) \in E}\}  \big)\\
&\hspace{0.25cm}= \cov\big(\mathbb{E}\{\hat{h}^v(L_u)\vert \{U^u_{ji}\}_{(ji) \in E}\},\mathbb{E}\{\hat{h}^v(L_{s/2+u})\vert \{U^u_{ji}\}_{(ji) \in E}\}\  \big)\\
&\text{(since $\hat{h}^v(L_u)$ and $\hat{h}^v(L_{s/2+u})$ are uncorrelated given  $\{U^u_{ji}\}_{(ji) \in E}$}\nonumber\\
&\hspace{0.25cm}= \cov\big({h}^v(L_u),{h}^v(L_{s/2+u})  \big)\\
&\text{(from the conditional unbiasedness proved in \cite{cohen1997})}\nonumber\\
&\hspace{0.25cm}= \cov\Big({h}^v(\{F_{tt}^{-1}(U^u_{ji})\}_{(ji) \in E}),{h}^v(\{F_{tt}^{-1}(1 - U^u_{ji})\}_{(ji) \in E}) \Big)
\end{align}
$F_{tt}^{-1}(\cdot)$ is a monotone function (inverse of a CDF) and, $h^v(\{\tau_{ji}\}^u_{(ji) \in E})$ is also monotone in all its arguments $\{\tau_{ji}\}^u_{(ji) \in E}$.  Hence, the composite function ${h}^v(\{F_{tt}^{-1}(U^u_{ji})\}_{(ji) \in E})$ is monotone is all its arguments $\{U^u_{ji}\}_{(ji) \in E}$ (because, the composition of monotone functions is monotone). Then, from Lemma \ref{lemma:negative_cor}, it follows that 
\begin{equation}
\cov\big(\hat{h}^v(L_u),\hat{h}^v(L_{s/2+u})\big) \leq 0.
\end{equation}
Then, from (\ref{eq:variance_correlated}), it follows that,
\begin{equation}
\label{eq:variance_reduction}
\var\bigg(\frac{\hat{h}^v(L_u) + \hat{h}(L_{s/2+u})}{2}\bigg) \leq \frac{1}{2}\var\big(\hat{h}^v(L_u).
\end{equation}
Then, by applying the total variance formula to the left hand side of (\ref{eq:variance_reduction}) and, using the fact $\var\{\hat{h}^v(L_u)\vert L_u \} = \frac{h^2(L_u)}{m-2}$(from \cite{cohen1997}), we get,
\begin{multline}
\var\bigg(\frac{\hat{h}^v(L_u) + \hat{h}(L_{s/2+u})}{2}\bigg) \leq \\ \frac{1}{2}\bigg({\frac{\sigma_G(v,T)^2}{m-2} +  \frac{(m-1)\var(\vert\mathcal{N}_G(v,T)\vert)}{m-2}}\bigg)
\end{multline} and, the proof follows by noting that $X(v,G)$ is the average of $\frac{\hat{h}^v(L_u) + h(L_{s/2+u})}{2}$ for $u = 1,\cdots, s/2$. 

Proof of Part \ref{th:unbiasedness_part2} of Theorem \ref{th:unbiasedness} follows from similar arguments as above and hence omitted. 

\section{Proof of Theorem  \ref{th:conditions_SPSA_convergence}}
\label{appen:conditions_SPSA_convergence}
The following result from \cite{kushner2003} will be used to establish the weak convergence of the sequence $\{\theta_k\}_{k \geq 0}$ obtained in Algorithm \ref{alg:stochastic_optimization}.

Consider the stochastic approximation algorithm,
\begin{equation}
\label{eq:stochastic_approximation_proof}
\theta_{k+1} = \theta_{k} + \epsilon H(\theta_k,x_k), \quad k = 0,1,\dots
\end{equation} where $\epsilon > 0$, $\{x_k\}$ is a random process and, $\theta \in \mathbb{R}^p$ is the estimate generated at time $k = 0,1,\dots$.
Further, let 
\begin{equation}
\label{eq:interpolated_trajectory}
\theta^\epsilon(t) = \theta_k \text{ for } t \in [k\epsilon,k\epsilon + \epsilon], \quad k = 0,1,\dots,
\end{equation}
which is a piecewise constant interpolation of $\{\theta_k\}$.
In this setting, the following result holds. 

\begin{theorem}
	\label{th:SA}
	Consider the stochastic approximation algorithm (\ref{eq:stochastic_approximation_proof}). Assume
	\begin{itemize}
		\item[{\bf SA1:}] $H(\theta, x)$ us uniformly bounded for all $\theta \in \mathbb{R}^p$ and $x \in \mathbb{R}^q$.
		\item[{\bf SA2:}] For any $l \geq 0$, there exists $h(\theta)$ such that
		\begin{equation}
		\frac{1}{N} \sum_{k = l}^{N+l-1} \mathbb{E}_l\{H(\theta, x_k)\} \rightarrow h(\theta) \text{ as } N \rightarrow \infty.
		\end{equation} where, $\mathbb{E}_l \{\cdot\}$ denotes expectation with respect to the sigma algebra generated by $\{x_k: k < l\}$.\label{th:SA_condition_2}
 		
		\item[{\bf SA3:}] The ordinary differential equation (ODE)
		\begin{equation}
		\label{eq:ODE}
		\frac{d\theta(t)}{dt} = h(\theta(t)),\quad  \theta(0) = \theta_0
		\end{equation} has a unique solution for every initial condition. \label{th:SA_condition_3}
	\end{itemize}

Then, the interpolated estimates  $\theta^\epsilon(t)$ defined in (\ref{eq:interpolated_trajectory}) satisfies
\begin{equation}
\lim_{\epsilon \to 0} \mathbb{P}\big( \sup_{0 \leq t \leq T}  \vert \theta^\epsilon(t) - \theta(t)  \vert \geq \eta \big) = 0 \text{ for all } T>0, \eta > 0
\end{equation} where, $\theta(t)$ is the solution of the ODE (\ref{eq:ODE}).
\end{theorem}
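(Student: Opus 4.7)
The plan is to prove the two assertions separately. Part one (asymptotic unbiasedness) reduces to an application of the ergodic theorem for regular Markov chains together with the unbiasedness already established in Theorem~\ref{th:unbiasedness}. Part two (weak convergence of $\{\theta_k\}$) mirrors the argument behind Theorem~\ref{th:conditions_SPSA_convergence}: re-verify SA1--SA3 of Theorem~\ref{th:SA} when the estimate at steps 4 and 6 of Algorithm~\ref{alg:stochastic_optimization} is replaced by the HMM-filter-based estimate $\hat{C}^\theta_N$ of Algorithm~\ref{alg:hmm_filter}.

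For the asymptotic unbiasedness, I would first exploit that the Monte Carlo randomness inside $\hat{c}_\theta$ (generated in Algorithm~\ref{alg:instantaneous_cost_estimation}) is, by construction, independent of the graph sample path $\{G_n\}$ and of the noisy observations $\{G_n^{\bar V}\}$, since the latter drive the HMM recursion (\ref{eq:hmm_recursion}) exclusively through $P_\theta$ and $B$. Using this independence together with the fact that $\hmmpost(G_i) = \mathbb{P}(G_n = G_i \mid G_0^{\bar V},\dots,G_n^{\bar V})$, I would compute
\begin{equation}
\mathbb{E}\{\hat c_\theta' \hmmpost\} \;=\; \sum_{G_i \in \mathcal{G}} \mathbb{E}\{\hat c(\theta,G_i)\}\,\mathbb{E}\{\hmmpost(G_i)\}
\;=\; \sum_{G_i \in \mathcal{G}} c(\theta,G_i)\,\mathbb{P}(G_n = G_i),
\end{equation}
where the last equality uses Part~\ref{th:unbiasedness_part2} of Theorem~\ref{th:unbiasedness} and the tower property applied to $\hmmpost$. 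Regularity of $\parmat$ implies $\mathbb{P}(G_n = G_i) \to \parstatdist(G_i)$, so $\mathbb{E}\{\hat c_\theta' \hmmpost\} \to C(\theta)$; Ces\`aro averaging over $n = 0,\dots,N-1$ then yields $\lim_{N\to\infty}\mathbb{E}\{\hat C_N^\theta\} = C(\theta)$.

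For the weak convergence part, I would re-cast the modified Algorithm~\ref{alg:stochastic_optimization} as a stochastic approximation of the form (\ref{eq:stochastic_approximation_proof}) with $x_k = (\{G_n, G_n^{\bar V}\}_{n\in \text{window}_k},\, \hmmpost,\, \text{Monte Carlo seeds},\, d_k)$ and then verify the hypotheses of Theorem~\ref{th:SA}. SA1 is immediate: for every $G$, $\hat c(\theta,G) \in [0,|V|]$, hence $\hat c_\theta' \hmmpost \leq |V|$, the finite difference in (\ref{eq:gradient_estimate}) is uniformly bounded by $|V|/\Delta$, and $|d_k(i)|=1$. SA3 is inherited from Theorem~\ref{th:conditions_SPSA_convergence} since the limiting ODE is still driven by $\nabla C(\theta)$ (plus an $O(\Delta^2)$ SPSA bias) and $C$ has Lipschitz gradient by the standing assumption. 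For SA2 the required time-averaged limit is
\begin{equation}
\frac{1}{K}\sum_{k=l}^{l+K-1}\mathbb{E}_l\!\left\{\frac{\hat C^{\theta_k+\Delta d_k}_{\bar N} - \hat C^{\theta_k-\Delta d_k}_{\bar N}}{2\Delta}d_k\right\} \;\longrightarrow\; h(\theta),
\end{equation}
and this is where I would invoke Part~1 of the theorem: the inner HMM-filter estimate is an asymptotically unbiased and uniformly bounded estimator of $C(\theta \pm \Delta d_k)$, and the outer averaging over $d_k$ gives the standard SPSA finite-difference approximation of $\nabla C$, whose expectation equals $\nabla C(\theta) + O(\Delta^2)$.

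The main obstacle I anticipate is the SA2 verification, because the HMM posterior $\hmmpost$ has long-range dependence on the entire observation history and on the initial prior $\hmmprior$, so one cannot directly appeal to the Markov law of large numbers that was used in Theorem~\ref{th:conditions_SPSA_convergence}. I would address this by using geometric forgetting of the HMM filter: under the regularity of $\parmat$ (so the joint chain $(G_n, G_n^{\bar V}, \hmmpost)$ inherits a unique stationary distribution independent of $\hmmprior$), the filter error $\|\hmmpost - \pi_n^{\theta,\star}\|_{TV}$ decays geometrically and hence the finite-window averages $\hat C^\theta_{\bar N}$ obey a law of large numbers jointly with the underlying graph chain. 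Once this is in place, SA1--SA3 are all satisfied and Theorem~\ref{th:SA} delivers the claimed weak convergence of the interpolated trajectory of $\{\theta_k\}$ to a solution of the mean ODE, whose stable equilibria are the local maxima of $C(\theta)$.
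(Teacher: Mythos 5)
Your proposal does not address the statement you were asked to prove. Theorem \ref{th:SA} is the general weak-convergence result for the abstract stochastic approximation recursion (\ref{eq:stochastic_approximation_proof}): given SA1--SA3, the piecewise-constant interpolation $\theta^\epsilon(\cdot)$ defined in (\ref{eq:interpolated_trajectory}) stays uniformly close, in probability, to the solution of the mean ODE (\ref{eq:ODE}) as $\epsilon \to 0$. What you have written is instead a proof sketch of Theorem \ref{th:asymptotic_unbiasedness} --- asymptotic unbiasedness of the HMM-filter estimate $\hat{C}^\theta_N$ from Algorithm \ref{alg:hmm_filter} and weak convergence of the correspondingly modified Algorithm \ref{alg:stochastic_optimization} --- and it concludes by \emph{invoking} Theorem \ref{th:SA} (``SA1--SA3 are all satisfied and Theorem \ref{th:SA} delivers the claimed weak convergence''). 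As an argument for Theorem \ref{th:SA} itself this is circular: you verify the hypotheses of the theorem for one particular instance and then appeal to the theorem's conclusion, but nowhere do you establish that SA1--SA3 imply that conclusion.

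A genuine proof of Theorem \ref{th:SA} would have to carry out the ODE method: establish tightness of the family $\{\theta^\epsilon(\cdot)\}_{\epsilon>0}$ (this is where the uniform boundedness, or uniform integrability, of $H$ in SA1 enters), decompose the increments into a mean drift plus a martingale-difference perturbation, show the perturbation term vanishes in the limit, use the averaging condition SA2 to identify the drift of any weak limit as $h(\theta)$, and finally use the uniqueness assumption SA3 to conclude that every limit point coincides with the solution of (\ref{eq:ODE}). None of these steps appears in your proposal. For what it is worth, the paper does not prove this theorem either: it is quoted as a known result from \cite{kushner2003}, and the paper's actual appendix arguments (for Theorems \ref{th:conditions_SPSA_convergence} and \ref{th:asymptotic_unbiasedness}) only verify SA1--SA3 for the specific algorithms --- which is essentially what your proposal does, but that is a proof of a different statement from the one assigned.
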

The condition {SA1} in Theorem \ref{th:SA} can be replaced by uniform integrability and the result still holds \cite{krishnamurthy2016}. 

Next, we show how Algorithm \ref{alg:stochastic_optimization} fulfills the assumptions SA1, SA2, SA3 in Theorem \ref{th:SA}. Detailed steps of similar proofs related to stochastic approximation algorithms can be found in \cite{krishnamurthy2002recursive} and Chapter 17 of \cite{krishnamurthy2016}.

Consider $\sup_{k}  \mathbb{E} \vert\vert \hat{\nabla}{C_k(\theta_{k})}  \vert\vert $ defined in Algorithm \ref{alg:stochastic_optimization}.
\begin{align}
\sup_{k}  \mathbb{E} \vert\vert \hat{\nabla}{C_k(\theta_{k})}  \vert\vert  &= \sup_{k}  \mathbb{E}\bigg|\bigg|\bigg(\frac{1}{\bar{N}}\sum_{n = 2kN}^{2kN +\bar{N}-1} \hat{c}(\theta,G_n) \\ 
&\hspace{1cm}- \frac{1}{\bar{N}}\sum_{n = (2k+1)N}^{(2k+1)N +\bar{N}-1} \hat{c}(\theta,G_n)\bigg)\frac{d_k}{2\Delta} \bigg|\bigg| \nonumber\\
&=\sup_{k} \frac{\sqrt{M}}{2\Delta\bar{N}}  \mathbb{E} \bigg\vert \sum_{n = 2kN}^{2kN +\bar{N}-1} \hat{c}(\theta,G_n) \\ 
&\hspace{1cm}- \sum_{n = (2k+1)N}^{(2k+1)N +\bar{N}-1} \hat{c}(\theta,G_n) \bigg\vert \\
&\leq \sup_{k}  \frac{\sqrt{M}}{2\Delta\bar{N}}   \mathbb{E} \bigg\{ \sum_{n = 2kN}^{{(2k+1)N +\bar{N}-1}}\big| \hat{c}(\theta,G_n) \big| \bigg\}\\ 
&\hspace{1cm} \text{(By triangle inequality)}\nonumber\\
&= \sup_{k}  \frac{\sqrt{M}}{2\Delta\bar{N}}  \sum_{n = 2kN}^{{(2k+1)N +\bar{N}-1}} \mathbb{E}  \{\hat{c}(\theta,G_n)\} \\ 
&\hspace{1cm} \text{(Since  $\hat{c}(\theta,G_n) \geq 0$)} \nonumber\\
&= \sup_{k}  \frac{\sqrt{M}}{2\Delta\bar{N}}  \sum_{n = 2kN}^{{(2k+1)N +\bar{N}-1}} \mathbb{E}  \{{c}(\theta,G_n)\} \\ 
&\hspace{1cm}\hspace{0cm} \text{(Conditioning on $G_n$ and,} \nonumber\\
&\hspace{1.5cm}\text{using Part \ref{th:unbiasedness_part2} of Theorem \ref{th:unbiasedness})} \nonumber
\end{align}
\begin{align}
 &\leq \sup_{k}  \frac{\sqrt{M}}{2\Delta\bar{N}}  \sum_{n = 2kN}^{{(2k+1)N +\bar{N}-1}} \big\{\max_{v\in V, G\in \mathcal{G}} \sigma_G(v,T)\big\} \\
 &=\frac{\sqrt{M}}{\Delta}  \max_{v\in V, G\in \mathcal{G}} \sigma_G(v,T)\\
 &\hspace{0cm} \text{(maximum exists since $V, \mathcal{G}$ are finite sets)}\nonumber
\end{align}
Hence the uniform integrability condition (alternative for SA1) is fulfilled.

\vspace{0.25cm}
Next, note that ${C_k(\theta_{k})}$ is an asymptotically (as $N$ tends to infinity) unbiased estimate of $C(\theta_k)$ by the uniform integrability and almost sure convergence (by law of large numbers for ergodic Markov chains). Therefore, as $\Delta$ (perturbation size in (\ref{eq:gradient_estimate})) tends to zero, $\hat{\nabla}{C_k(\theta_{k})}$ in Algorithm \ref{alg:stochastic_optimization} fulfills the SA2 condition. 

\vspace{0.25cm}
SA3 is fulfilled by the (global) Lipschitz continuity of the gradient $\nabla_\theta C(\theta)$ which is a sufficient condition for the existence of a unique solution for a non-linear ODE (for any initial condition) \cite{khalil1996noninear}.

\section{Proof of Theorem \ref{th:asymptotic_unbiasedness}}
Consider the expected value of the finite sample estimate of the influence function (output of Algorithm \ref{alg:hmm_filter})
\label{appen:asymptotic_unbiasedness}
\begin{align}
\mathbb{E}\{\hat{C}^\theta_N\} &= \mathbb{E}\{\frac{\sum_{n = 1}^{N}\hat{c}^T\hmmpost}{N}\}\\
& = \frac{c^T}{N} \sum_{n = 1}^{N} \mathbb{E}\{\hmmpost\}\\ 
&\text{ (by Theorem \ref{th:unbiasedness}})\nonumber\\
& = \frac{c^T}{N} \sum_{n = 1}^{N} (\parmat^n)^T\pi_0 \label{eq:cessaro_sum}
\end{align}
Then, the result follows by noting that (\ref{eq:cessaro_sum}) is a Ces\`{a}ro summation of a convergent sequence. 

The weak convergence of the sequence $\{\theta_k\}_{k\geq 0}$ (with HMM filter estimates) follows again from Theorem \ref{th:SA}. Since the proof is mostly similar to the Proof of Theorem \ref{th:conditions_SPSA_convergence}, we skip the proof. 
%
%

\ifCLASSOPTIONcaptionsoff
  \newpage
\fi



\bibliographystyle{ieeetr}
\end{document}